\documentclass[11pt]{article}

\usepackage[draft]{yuhao-template}

\newcommand{\MMS}{\textnormal{MMS}}

\newcommand{\vr}{\mathbf{r}}

\newcommand{\bv}{p}

\title{Exact MMS Allocations under Personalized Bivalued Valuations:\\
Goods and Chores}
\author{Yuhao Zhang\\
Shanghai Jiao Tong University\\
\texttt{zhang\_yuhao@sjtu.edu.cn}}
\date{\today}

\begin{document}

\maketitle

\begin{abstract}
The maximin share (MMS) is a central fairness benchmark for allocating indivisible goods and chores. We study additive valuations in the personalized bivalued setting, where each agent assigns one of two agent-specific values to every item. Whether exact MMS allocations always exist in this setting has remained a major open question, as highlighted by Ebadian, Peters, and Shah and by Garg, Huang, and Segal-Halevi. We answer this question affirmatively: we prove that exact MMS allocations always exist for both goods and chores and can be computed in polynomial time. Our proof combines a quota-based reformulation with an envelope relaxation, a sparse extreme-point construction, and flow-based rounding that controls the total rounding loss.
\end{abstract}

\section{Introduction}
\label{sec:introduction}

Fair division asks how to allocate indivisible items among agents with heterogeneous preferences. We consider $m$ items and $n$ agents, and require every item to be allocated. Items may be goods, which agents prefer to receive, or chores, which agents prefer to avoid. Since agents may evaluate the same item differently, fairness must be defined relative to each agent's own valuation.

Two prominent approaches are envy-based and share-based fairness. Envy-based notions compare an agent's bundle with those assigned to others; examples include envy-freeness and its relaxations EF1 and EFX\@. Share-based notions instead compare each agent's bundle with a benchmark derived from her own valuation. Proportionality, for example, asks each agent to receive at least $1/n$ of her total utility for goods or to incur at most $1/n$ of her total cost for chores. Even for goods, proportionality may be impossible with indivisible items: if a single good is valuable to every agent, at most one can receive it, leaving the others with zero value despite positive proportional benchmarks.

\citet{DBLP:conf/bqgt/Budish10} introduced the maximin share (MMS) guarantee for indivisible goods as a more robust share-based benchmark. Agent $i$ partitions the items into $n$ bundles and assumes that she receives the least valuable one; her MMS is the largest value she can guarantee by choosing the partition. The corresponding notion for chores was first studied by \citet{DBLP:conf/aaai/AzizRSW17}; it is the smallest worst-bundle cost that an agent can guarantee. An exact MMS allocation assigns to each agent a bundle that meets the corresponding benchmark. Unfortunately, exact MMS allocations need not exist in general, for either goods \citep{DBLP:journals/jacm/KurokawaPW18} or chores \citep{DBLP:conf/aaai/AzizRSW17}. However, the known counterexamples rely on carefully constructed valuation profiles, leaving open whether exact MMS allocations exist for structured and practically relevant valuation classes.

We focus on \emph{personalized bivalued valuations}, a simple model that retains the central heterogeneity of fair division while supporting coarse preference elicitation. Each agent classifies goods as preferred or nonpreferred, or chores as easy or difficult, and assigns one value to each category. The two values may differ across agents.

Previous work either imposes additional restrictions on personalized bivalued valuations or relaxes the MMS benchmark\@. \citet{feige2022maximin} assumed that all agents share a common value pair, that is, the non-personalized bivalued setting, and proved exact MMS existence for both goods and chores. In a separate result, \citet{DBLP:conf/atal/EbadianP022} allowed the value pair to vary across agents and obtained the same conclusion when each pair can be normalized to $\{1,p_i\}$ for an integer $p_i$.

Without either restriction, prior work has instead studied multiplicative relaxations of MMS. For personalized bivalued chores, \citet{DBLP:conf/aaai/GargHS25} obtained a $15/13$-MMS guarantee. To our knowledge, personalized bivalued goods have not received a stronger class-specific approximation; the best available guarantee is inherited from general additive valuations, for which \citet{DBLP:journals/corr/abs-2511-13056} recently obtained a $7/9$-MMS allocation. Whether exact MMS allocations exist for personalized bivalued valuations remained a major open question, highlighted in \citet{DBLP:conf/atal/EbadianP022} and \citet{DBLP:conf/aaai/GargHS25}. We resolve this question for both goods and chores.

\begin{center}
\setlength{\fboxsep}{8pt}%
\fbox{%
\begin{minipage}{0.9\linewidth}
\textbf{Main result.}
Every personalized bivalued instance of indivisible goods or indivisible chores admits an exact MMS allocation. Moreover, such an allocation can be computed in polynomial time.
\end{minipage}}
\end{center}

\subsection{Technical Highlights}
The proofs for goods and chores share the same framework. We present the goods argument; the chores proof uses the corresponding capacity-based construction, with an upper concave envelope in place of a lower convex envelope.

\paragraph{Quota-based reformulation and $n$ worlds.}
Our first idea is a quota-based reformulation of the MMS constraints. After normalizing agent $i$'s two values to $1$ and $p_i>1$, we let a preferred quota $r$ specify how many preferred goods she should receive. Her demand $f_i(r)$ is the minimum total number of goods sufficient to meet her MMS if at least $r$ of them are preferred. Thus, the problem reduces to finding an implementable preferred-quota vector. Rather than construct a single vector directly, we consider $n$ candidate vectors, viewed as $n$ different worlds, and distribute each agent's preferred quotas fractionally across them. A prefix-capacity condition ensures that every world respects the available supply of preferred goods, while an aggregate bound controls the total demand across all worlds. A best-of-$n$-worlds argument will then identify one implementable world.

\paragraph{Fractional feasibility and integral tightness.}
The ceiling in the original demand function makes even fractional feasibility nontrivial: the natural uniform fractional table need not satisfy the desired demand bound. We therefore replace $f_i$ by its lower convex envelope $g_i$, the convex polygonal curve drawn below the integral demand points and pushed upward as far as possible. This choice serves two purposes. First, pushing the curve upward makes it pass through as many original integral demand points as convexity permits. At each such tight quota, $g_i(r)=f_i(r)$, so no loss occurs when we later return from the relaxed demand to the original demand. Second, convexity, together with $g_i(r)\leq f_i(r)$ at every integral quota, ensures that the relaxed demand at an average quota is at most the average original demand of the integral quotas being mixed. Applying this inequality to the quotas induced by an MMS partition shows that the uniform $n$-world table satisfies the relaxed-demand bound.

\paragraph{Sparse extreme points and low-loss rounding.}
For each agent, we restrict her quotas to the affine segment of $g_i$ containing the uniform quota. We call an entry a bound entry if it lies at an endpoint of this segment. As observed above, both endpoints are integral tight quotas; hence, every bound entry is already integral, satisfies $g_i=f_i$, and incurs no rounding loss. Within these segment restrictions, we further restrict the feasible tables to a weighted transportation polytope and choose an extreme point. The key intuition, going back to the classical transportation-basis argument of \citet{dantzig1951application}, is that a cycle of non-bound entries admits an alternating two-sided perturbation and therefore cannot occur at an extreme point. Thus, the non-bound entries form a forest and constitute only a sparse set. We then use a max-flow argument to round these entries while preserving all per-agent quota totals and per-world prefix-capacity constraints. Each non-bound entry contributes less than one unit of additional demand, and the forest structure limits the aggregate rounding loss to $n-1$. The best-of-$n$-worlds argument therefore yields an implementable integral quota vector and hence an exact MMS allocation.

\subsection{Other Related Work}

\paragraph{Goods.}
For general additive goods, exact MMS allocations need not exist \citep{DBLP:journals/jacm/KurokawaPW18}. Starting from the $2/3$-MMS guarantee of \citet{DBLP:journals/jacm/KurokawaPW18}, the approximation factor was improved to $3/4$ \citep{DBLP:conf/sigecom/GhodsiHSSY18} and subsequently to $3/4+o(1)$ \citep{DBLP:journals/ai/GargT21,DBLP:conf/ijcai/AkramiGST23}. The first constant improvement beyond $3/4$ was $3/4+3/3836$ \citep{DBLP:conf/soda/AkramiG24}, followed by $10/13$ \citep{DBLP:conf/soda/HeidariKSS26} and the current best factor of $7/9$ \citep{DBLP:journals/corr/abs-2511-13056}. On the negative side, no factor above $39/40$ is possible even for three agents \citep{DBLP:conf/wine/FeigeST21}. Other structured valuation classes admitting exact MMS allocations include ternary valuations with values in $\{0,1,2\}$ \citep{DBLP:journals/talg/AmanatidisMNS17} and weakly lexicographic valuations \citep{DBLP:conf/atal/EbadianP022}.

\paragraph{Chores.}
For general additive chores, \citet{DBLP:conf/aaai/AzizRSW17} showed that exact MMS allocations may fail and gave a polynomial-time $2$-approximation. The approximation factor was subsequently improved to $4/3$ \citep{DBLP:journals/teco/BarmanK20}. \citet{DBLP:conf/sigecom/HuangL21} then obtained an existential guarantee of $11/9$ and a polynomial-time guarantee of $5/4$. The current best general factor is $13/11$, together with a polynomial-time $(13/11+\varepsilon)$-approximation for every $\varepsilon>0$ \citep{DBLP:conf/sigecom/HuangS23}. On the negative side, no factor below $44/43$ is possible even for three agents \citep{DBLP:conf/wine/FeigeST21}. Other structured valuation classes admitting exact MMS allocations include weakly lexicographic valuations \citep{DBLP:conf/atal/EbadianP022} and factored instances \citep{DBLP:conf/aaai/GargHS25}.

\subsection{Organization}
We begin in \Cref{sec:preliminaries} by formalizing the model and recording some basic reductions used throughout the paper. For goods, we first give a proof overview in \Cref{sec:goods-overview}, then develop the fractional relaxation in \Cref{sec:fractional-demand}, construct a sparse extreme-point solution in \Cref{sec:extreme-point-solution}, and finally give the low-loss rounding procedure in \Cref{sec:prefix-rounding} and prove the following theorem.

\begin{restatable}{theorem}{exactMMSGoodsTheorem}
\label{thm:exact_mms_goods}
Every personalized bivalued instance of indivisible goods admits an exact MMS allocation. Moreover, such an allocation can be computed in polynomial time.
\end{restatable}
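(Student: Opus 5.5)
We follow the framework sketched in the technical highlights. After normalizing each agent's values to $\{1,p_i\}$ with $p_i>1$, let $P_i$ be agent $i$'s set of preferred goods. For a quota $r$, define the demand $f_i(r)=r+\max\{0,\lceil(\MMS_i-rp_i)\rceil\}$, i.e., the fewest goods needed to reach $\MMS_i$ when $r$ of them are preferred. A quota vector $(r_1,\dots,r_n)$ is \emph{implementable} if two conditions hold. First, the preferred goods can be distributed so that each agent $i$ receives $r_i$ of her own preferred goods. By Hall's theorem, after sorting agents by how their preferred sets nest, this reduces to prefix-capacity inequalities. Second, $\sum_i f_i(r_i)\le m$. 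The remaining goods are then dealt out arbitrarily, and every extra good only helps.

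Assume for now that implementability is exactly the prefix-capacity condition plus the total-demand bound. We then need one integral quota vector satisfying both.

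\textbf{Step 1: fractional feasibility.} Fix an MMS partition $B_i^1,\dots,B_i^n$ for each agent, and let $r_i^k=|B_i^k\cap P_i|$. Each bundle has value at least $\MMS_i$, so $|B_i^k|\ge f_i(r_i^k)$. Summing over $k$ gives $\sum_k f_i(r_i^k)\le m$ for every $i$. Let $g_i$ be the lower convex envelope of $f_i$ on integer points. The uniform table $x_i^k=\bar r_i=\frac1n\sum_k r_i^k$ places agent $i$'s average quota in every world. By convexity, $n\,g_i(\bar r_i)\le\sum_k f_i(r_i^k)\le m$. We must still check the prefix-capacity constraints for each world. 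I expect to get these from the partitions themselves, by averaging the supplies of preferred goods, though this needs care in how worlds are defined. The aggregate statement we aim for is $\sum_k\sum_i g_i(x_i^k)\le nm$, with every world satisfying prefix capacity and every agent's quotas summing to $n\bar r_i$.

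\textbf{Step 2: extreme point.} For each $i$, restrict $x_i^k$ to the affine piece $[a_i,b_i]$ of $g_i$ that contains $\bar r_i$. Its endpoints are integral points where $g_i=f_i$, and $g_i$ is linear on the piece. The constraints are then linear: row sums are fixed, prefix capacities hold per world, and $\sum_{i,k}g_i(x_i^k)\le nm$, which is linear on the pieces. Take an extreme point of this polytope. The standard transportation-basis argument says that a cycle of entries strictly inside $(a_i,b_i)$ admits an alternating perturbation preserving every constraint. Prefix constraints are nested, so I would treat them as a laminar family and perturb along a cycle in the bipartite agent–world graph. Hence the interior entries form a forest, with at most $2n-1$ of them. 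One additional tight constraint may cost one more.

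\textbf{Step 3: rounding.} Round the interior entries to integers by a max-flow computation on the forest. The flow must preserve row sums and the per-world prefix capacities; integrality of flows in networks with laminar capacities makes this possible. A bound entry already satisfies $f_i=g_i$. An interior entry raised to $\lceil x\rceil$ increases $f_i$ over $g_i$ by less than $1$, since $f_i(r)\le g_i(r)+1$ on a linear piece between tight points, up to the slope. Because of the row-sum preservation and the forest structure, the total loss should be at most $n-1$, giving $\sum_{k}\sum_i f_i(x_i^k)\le nm+n-1$. Some world therefore has $\sum_i f_i(x_i^k)\le m+(n-1)/n$, and integrality forces $\le m$. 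That world is implementable and yields an exact MMS allocation. Every step (computing MMS values for bivalued instances, the LP extreme point, max-flow) runs in polynomial time.

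The main obstacle is Step 3. Bounding the total rounding loss by $n-1$ rather than by the number of forest edges requires the up- and down-roundings within each agent's row to compensate under the linear $g_i$, so that the loss is charged only to fractional parts. I would first try pairing roundings along each tree so that each agent's row sum is preserved exactly, making the linear parts cancel and leaving only ceiling effects, at most one per component edge beyond a root.
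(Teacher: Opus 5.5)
Your outline follows the paper's framework, but the step that carries the weight fails as stated. In Step 2 you take a vertex of the polytope given by fixed row sums, the bounds $a_i\le x_i^k\le b_i$, and the per-world prefix inequalities $\sum_{h\le j}x_h^k\le\theta_j$ (with $\theta_j=|P_j|$). You then claim that the alternating cycle perturbation preserves every constraint. It does not. Suppose a cycle passes through world $k$ via rows $i<i'$. Shifting $\varepsilon$ from $(i',k)$ to $(i,k)$ raises every prefix sum of world $k$ with index $j\in[i,i')$, so a tight prefix constraint there blocks that sign, and the other world on the cycle can block the opposite sign.

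Concretely, take a polytope of your form with $n=3$, $\theta=(2,2,4)$, and segments $[0,2],[0,1],[0,4]$, each containing $\theta_i/3$. Let the three worlds have columns $(1,1,2)$, $(1,1,2)$, $(0,0,0)$. A two-sided direction must vanish on bound entries. The tight constraints $x_1^1+x_2^1=\theta_2$ and $x_1^1+x_2^1+x_3^1=\theta_3$ then force it to vanish at $(1,1)$ and $(3,1)$, and the row sums force it to vanish at $(1,2)$ and $(3,2)$. So this is a vertex, yet its non-bound entries form a $4$-cycle.

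Laminarity does not rescue the argument. Rows and prefix sets are two laminar families, so the matrix is totally unimodular and the vertices are even integral. But integrality is not what you need: an integer strictly inside $(a_i,b_i)$ can still have $f_i>g_i$, and it is the number of such entries that must be controlled.

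The missing idea is the paper's weighted transportation polytope. Replace the prefix inequalities by one stronger equality per world, $\sum_{i:\theta_i>0}(x_i^k-a_i)/\theta_i=\lambda$, with $\lambda$ fixed by the uniform table. Using $\theta_i\le\theta_j$ for $i\le j$, $x_i^k\ge a_i$, and the value of $\lambda$, this equality implies every prefix capacity. After substituting $(x_i^k-a_i)/\theta_i$, the polytope is a capacitated transportation polytope. Perturbing by $\pm\theta_i\varepsilon$ around a cycle then preserves both rows and worlds, so the forest property genuinely holds.

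Your Step 3 accounting also falls short of $n-1$, even granting a forest. A forest on $n$ agents and $n$ worlds may have $2n-1$ edges, and ``one unit per edge beyond a root'' still allows about $2n-2$, which only yields a world of demand $m+1$. The entrywise bound you state, $f_i\le g_i+1$, is also too weak.

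The paper instead saves one unit per \emph{agent}. The gap is strict: $f_i(q)<g_i(q)+1$ at integers, because $f_i=\lceil\ell_i\rceil$ for the convex function $\ell_i(r)=r+\max\{\MMS_i-rp_i,0\}$, and hence $\ell_i\le g_i$. Rounding each entry to its floor or ceiling keeps it in $[a_i,b_i]$ and preserves row sums. Affinity of $g_i$ on the segment then gives $\sum_k g_i(\hat x_i^k)=\sum_k g_i(x_i^k)\le m$ automatically, with no pairing of roundings needed. If row $i$ has $e_i\ge1$ non-bound entries, then $\sum_k f_i(\hat x_i^k)<m+e_i$, and integrality gives at most $m+e_i-1$. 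Let $E=\sum_i e_i$ and let $t$ be the number of agents with $e_i>0$. Since there are at most $n$ world vertices, the forest gives $E\le t+n-1$, so the total excess $\sum_i\max\{e_i-1,0\}=E-t$ is at most $n-1$.

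Finally, three smaller points. The nesting of preferred sets you use for implementability requires the IDO reduction. Normalizing to $\{1,p_i\}$ requires handling zero lower values separately. The prefix capacity of the uniform table, which you left open, is immediate from $\frac1n\sum_{i\le j}\theta_i\le\theta_j$.
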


We then turn to chores in \Cref{sec:chores}, where we adapt the same framework to capacities and prove the following theorem.

\begin{restatable}{theorem}{exactMMSChoresTheorem}
\label{thm:exact_mms_chores}
Every personalized bivalued instance of indivisible chores admits an exact MMS allocation. Moreover, such an allocation can be computed in polynomial time.
\end{restatable}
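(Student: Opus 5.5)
The chores statement will follow the goods argument behind \Cref{thm:exact_mms_goods}, with capacities in place of demands. For each agent $i$ we normalize her two costs to $1$ and $p_i>1$, with $p_i$ the cost of a difficult chore. For every integer $r$ we define the capacity $h_i(r)$: the maximum total number of chores agent $i$ can accept, subject to at most $r$ of them being difficult, while her cost stays at most her MMS. Explicitly,
\[
h_i(r)=\min\bigl(r,d_i\bigr)+\Bigl\lfloor \bigl(\MMS_i-p_i\min(r,d_i)\bigr)\Big/1\Bigr\rfloor ,
\]
where $d_i$ is the number of chores she finds difficult. We need to be careful here: easy chores cost $1$, and the count of easy chores is also bounded by what exists. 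An allocation is exact MMS when every agent's bundle fits her capacity at the realized count of difficult chores. So the problem becomes finding a vector $(r_i)$ of difficult-quotas together with an assignment in which agent $i$ receives at most $r_i$ of her difficult chores and at most $h_i(r_i)$ chores in total, and every chore is assigned. A Hall/flow argument, ordering agents by $p_i$ in the same way as the goods preliminaries, should show this is implementable exactly when $\sum_i h_i(r_i)\ge m$ and a prefix-capacity condition on difficult chores holds, namely that the difficult supply can be absorbed by the quotas. I will take this reduction as the chores analogue of the preliminary reductions.

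We then introduce $n$ worlds and a fractional table $x_{i,k}$ of quotas, with $\sum_k x_{i,k}$ equal to agent $i$'s total difficult count in her own MMS partition and per-world prefix-capacity constraints. We replace $h_i$ by its upper concave envelope $\bar h_i\ge h_i$, the concave curve lying above the integral capacity points and pushed down as far as possible. Applying Jensen to agent $i$'s own MMS partition, whose $n$ bundles have quotas $r^{(1)},\dots,r^{(n)}$ and sizes at most $h_i(r^{(k)})$, gives $\sum_k \bar h_i(\bar r_i)\ge n\bar h_i(\bar r_i)\ge\sum_k h_i(r^{(k)})\ge m$. Here $\bar r_i$ is the average quota. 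Hence the uniform table satisfies the aggregate bound $\sum_{i,k}\bar h_i(x_{i,k})\ge nm$ in relaxed form, with slack that we need to quantify.

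Next we restrict each agent's entries to the affine piece of $\bar h_i$ containing $\bar r_i$. The endpoints of that piece are integral tight points, where $\bar h_i=h_i$. The constraints now form a weighted transportation polytope, and we take an extreme point maximizing total relaxed capacity. The Dantzig cycle argument shows that the non-bound entries form a forest on agents $\cup$ worlds, so there are at most $2n-1$ of them. We round these entries by max-flow, preserving agent totals and per-world prefix capacities; on concave pieces we want to round in the capacity-preserving direction where possible. Each rounded entry loses less than one unit of capacity.

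The main obstacle is integral tightness: we need the total loss to be at most $n-1$, which leaves $\sum_{i,k}h_i(r_{i,k})\ge nm-(n-1)$. Averaging over worlds then gives some world $k$ with $\sum_i h_i(r_{i,k})>m-1$, and integrality gives $\ge m$. I would bound the loss by charging each non-bound entry to a forest edge and arguing that each connected component (tree) contains at least one entry whose rounding costs nothing, because flow conservation lets the fractional parts in a component sum to an integer. That yields loss $<(\#\text{edges})-(\#\text{components})\le n-1$ after a careful count, and this count is the step I expect to be delicate. Finally, the chosen world's quota vector gives the allocation through the flow reduction, and every step (envelopes, LP extreme point, max-flow) runs in polynomial time.
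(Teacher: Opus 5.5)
Your proposal has two genuine gaps.

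\textbf{The deferred reduction breaks your machinery.} Because you put the quota on \emph{difficult} chores, agent $i$ gets $r_i$ unrestricted slots and $e_i(r_i):=h_i(r_i)-r_i=\lfloor \MMS_i-p_ir_i\rfloor$ easy-only slots. Index chores from easiest to hardest, so that $[\eta_i]$ is agent $i$'s easy set, and test $X=\{q,\dots,m\}$. Hall's condition for covering every chore then reads $\sum_i r_i+\sum_{i:\eta_i\ge q}e_i(r_i)\ge m-q+1$. The natural split into a total-capacity condition plus a per-world condition leaves $\sum_{i\le k}e_i(r_i)\le\eta_k$. Here agents must be sorted by $\eta_i$, not by $p_i$: the nesting of the easy sets is what reduces Hall's condition to prefixes.

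This requirement bounds a floor-valued, nonlinear function of your quota, so it is not a linear constraint on the quota table. Moreover, your row totals are $d_i=m-\eta_i$, unrelated to the prefix thresholds. The argument that weighted balance implies prefix capacity needs each row total to equal that agent's own threshold, since it normalizes by the row total and uses $\theta_i\le\theta_k$. So your ``weighted transportation polytope'' has no reason to make each world implementable.

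The paper avoids this by putting the quota on easy-only slots. It uses $F_i(r)=r+\lfloor(\MMS_i-r)/p_i\rfloor$ on $\{0,\dots,\rho_i\}$ with $\rho_i=\min\{\eta_i,\lfloor\MMS_i\rfloor\}$, and row sums $\eta_i$. Then the prefix condition $\sum_{i\le k}r_i^{(c)}\le\eta_k$ is linear and follows from weighted balance verbatim as for goods. Together with $\sum_iF_i\ge m$, it gives Hall's condition on the chore side via $q=\min X$.

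\textbf{The loss count fails.} Suppose $a$ agents and $b$ worlds are incident to the $E$ non-bound entries and the forest has $C$ components. Then $E-C=a+b-2C$, which equals $2n-2$ for a single spanning tree, so saving one unit per tree cannot yield $n-1$. Nor is any individual entry forced to be lossless. That the fractional parts in a row sum to an integer says nothing about $h_i(\hat r)=\bar h_i(\hat r)$ at a particular entry.

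The saving must be one unit per \emph{agent}, coming from integrality of each row sum. For an agent with $k_i\ge1$ non-bound entries, $\sum_c h_i(r_{i,c})$ is an integer strictly larger than $\sum_c\bar h_i(x_{i,c})-k_i\ge m-k_i$, hence at least $m-k_i+1$. Summing over agents, the total loss is at most $\sum_i\max\{k_i-1,0\}=E-a$. The forest bound $E\le a+b-1$ then gives $E-a\le b-1\le n-1$.
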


\section{Preliminaries}
\label{sec:preliminaries}

For every nonnegative integer $q$, let $[q]:=\{1,\ldots,q\}$, with $[0]:=\emptyset$. There are $n\geq 1$ agents, indexed by $[n]$, and $m\geq 0$ indivisible items, indexed by $[m]$. Each agent--item pair $(i,j)\in[n]\times[m]$ is associated with a value $v_{ij}$. We consider two variants of the problem, depending on whether the items are goods or chores. For every agent $i\in[n]$, the valuation is additive; that is, for every bundle $S\subseteq[m]$, we have
$$
v_i(S)=\sum_{j\in S}v_{ij}.
$$
All values are nonnegative rationals represented in binary. This paper focuses on the \emph{personalized bivalued} setting: for each agent $i\in[n]$, there are two agent-specific values $0\leq a_i<b_i$ such that $v_{ij}\in\{a_i,b_i\}$ for every item $j\in[m]$. We do not require both values to occur, so a constant valuation can be represented by choosing an arbitrary distinct value that is not realized. No relationship is imposed between the value pairs $(a_i,b_i)$ of different agents. The main proofs initially assume that all realized item values are strictly positive. This is without loss of generality: the zero-value cases for both goods and chores are handled in \Cref{sec:zero-lower-values}.

In the goods setting, $v_{ij}$ is the utility that agent $i$ derives from item $j$. In the chores setting, $v_{ij}$ is the disutility, or cost, incurred by agent $i$ for performing chore $j$.

An allocation is a tuple $A=(A(1),\ldots,A(n))$ that forms a partition of $[m]$. In other words, the bundles $A(1),\ldots,A(n)$ are pairwise disjoint and satisfy $\bigcup_{i\in[n]}A(i)=[m]$. Under allocation $A$, agent $i$ receives bundle $A(i)$, whose value to her is
$$
v_i(A(i))=\sum_{j\in A(i)}v_{ij}.
$$

Let $\mathcal{A}$ denote the set of all allocations. In the goods setting, the \emph{maximin share} of agent $i$ is
$$
\MMS_i
=
\max_{A\in\mathcal{A}}
\min_{i'\in[n]}v_i(A(i')).
$$
Thus, agent $i$ partitions the goods into $n$ bundles and then receives the bundle that she values the least. Her maximin share $\MMS_i$ is the largest utility that she can guarantee by choosing the partition. An allocation $A\in\mathcal{A}$ is an \emph{exact MMS allocation for goods} if
$$
v_i(A(i))\geq\MMS_i
\qquad
\text{for every }i\in[n].
$$

In the chores setting, the corresponding benchmark is the \emph{minimax share}
$$
\MMS_i
=
\min_{A\in\mathcal{A}}
\max_{i'\in[n]}v_i(A(i')).
$$
Here, agent $i$ partitions the chores into $n$ bundles and then receives the bundle with the largest cost according to her valuation. Hence, $\MMS_i$ is the smallest worst-case cost that she can guarantee by choosing the partition. An allocation $A\in\mathcal{A}$ is an \emph{exact MMS allocation for chores} if
$$
v_i(A(i))\leq\MMS_i
\qquad
\text{for every }i\in[n].
$$
Although the two benchmarks optimize in opposite directions, we refer to both as the MMS benchmark and use the same notation $\MMS_i$.

\paragraph{Identical orderings.}
An instance has an \emph{identical ordering} (IDO) if the items can be indexed so that
$$
v_{i1}\geq v_{i2}\geq\cdots\geq v_{im}
\qquad
\text{for every }i\in[n].
$$
The reduction to IDO instances was introduced by \citet{DBLP:journals/aamas/BouveretL16} for goods and first adapted to chores by \citet{DBLP:conf/sigecom/HuangL21}. In both settings, the reduction preserves every agent's MMS, and any exact MMS allocation of the resulting IDO instance can be converted in polynomial time into an exact MMS allocation of the original instance. Thus, existence and polynomial-time computation for IDO instances imply the same results for arbitrary instances, and we restrict attention to IDO instances throughout the technical sections.

\section{Proof Overview for Goods}
\label{sec:goods-overview}

The proofs for goods and chores are parallel. We present the goods argument in the main text and its chore counterpart in \Cref{sec:chores}. This section outlines the goods proof.

\paragraph{Cardinality reformulation of the MMS constraints.}
The first step is to reformulate the MMS requirement in the personalized bivalued IDO setting. Recall that, in an IDO instance, the goods are indexed in nonincreasing order of value for every agent. Because MMS is homogeneous under positive scaling, we normalize each agent's nonpreferred value to $1$. Consequently, for every agent $i\in[n]$, there exist a preferred value $\bv_i>1$ and a threshold $\theta_i$ such that
$$
v_{ij}
=
\begin{cases}
\bv_i, & \text{if } j\leq\theta_i,\\
1, & \text{if } j>\theta_i.
\end{cases}
$$
Thus, the first $\theta_i$ goods are preferred by agent $i$, while all remaining goods are nonpreferred. Throughout the proof, $\MMS_i$ is computed with respect to the normalized values $1$ and $\bv_i$. Without loss of generality, we index the agents in nondecreasing order of their thresholds, so that $\theta_1\leq\theta_2\leq\cdots\leq\theta_n$.

For every agent $i$, let $r_i$ denote her \emph{preferred quota}, namely, the number of preferred goods that we aim to allocate to her. Given an integral preferred quota $r\in\{0,\ldots,\theta_i\}$, we define the \emph{demand function} of agent $i$ as
$$
f_i(r):=\left\lceil r+\max\{\MMS_i-r\bv_i,0\}\right\rceil.
$$
Once agent $i$ receives $r$ preferred goods, these goods contribute exactly $r\bv_i$ to her utility. It then suffices to allocate a total of $f_i(r)$ goods to her, because each of the additional $f_i(r)-r$ goods contributes at least $1$.

We refer to any vector $\vr=(r_1,\ldots,r_n)\in\prod_{i=1}^{n}\{0,\ldots,\theta_i\}$ as a \emph{preferred-quota vector}. Given such a vector, an allocation $A$ satisfies the \emph{bivalued cardinality requirement} induced by $\vr$ if, for every agent $i\in[n]$,
$$
|A(i)\cap[\theta_i]|\geq r_i
\qquad\text{and}\qquad
|A(i)|\geq f_i(r_i).
$$
Any allocation satisfying these conditions is an exact MMS allocation. Indeed, for every agent $i$,
$$
v_i(A(i))
\geq r_i\bv_i+f_i(r_i)-r_i
\geq r_i\bv_i+\max\{\MMS_i-r_i\bv_i,0\}
\geq\MMS_i.
$$
We call a preferred-quota vector $\vr$ \emph{implementable} if there exists an allocation satisfying the bivalued cardinality requirement induced by $\vr$. Our objective therefore reduces to finding an implementable preferred-quota vector.

The following lemma gives sufficient conditions for implementability.

\begin{lemma}
\label{lem:implementable_condition}
Let $\vr=(r_1,\ldots,r_n)\in\prod_{i=1}^{n}\{0,\ldots,\theta_i\}$ be a preferred-quota vector. It is implementable if both of the following conditions hold:
\begin{enumerate}
    \item \emph{Prefix capacity:} $\sum_{i=1}^{k}r_i\leq\theta_k$ for every $k\in[n]$;
    \item \emph{Total demand:} $\sum_{i=1}^{n}f_i(r_i)\leq m$.
\end{enumerate}
\end{lemma}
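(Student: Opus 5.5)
We construct the allocation greedily in two phases: first the preferred goods, then everything else. Agents are indexed so that $\theta_1\leq\cdots\leq\theta_n$.

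\textbf{Phase 1 (preferred goods).} Process the agents in order $i=1,2,\ldots,n$. Give agent $i$ the $r_i$ lowest-indexed goods not yet allocated, namely goods $\sum_{i'<i}r_{i'}+1$ through $\sum_{i'\leq i}r_{i'}$. By the prefix-capacity condition, $\sum_{i'\leq i}r_{i'}\leq\theta_i$, so every good agent $i$ receives lies in $[\theta_i]$. Hence $|A(i)\cap[\theta_i]|\geq r_i$. This step uses both the ordering of thresholds and the IDO structure: each $[\theta_i]$ is an initial segment, and these segments are nested.

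\textbf{Phase 2 (cardinalities).} Agent $i$ currently holds $r_i$ goods. We need $f_i(r_i)\geq r_i$, which holds because $f_i(r)=\lceil r+\max\{\cdot,0\}\rceil\geq r$ and $r$ is an integer. So agent $i$ still needs $f_i(r_i)-r_i\geq 0$ additional goods of any kind. The number of remaining goods is $m-\sum_i r_i$. The total additional need is $\sum_i f_i(r_i)-\sum_i r_i$, and by the total-demand condition this is at most $m-\sum_i r_i$. We therefore hand out arbitrary remaining goods to meet each agent's need, and assign any leftover goods arbitrarily (say, to agent $1$) so that the allocation partitions $[m]$. Adding goods never decreases either $|A(i)\cap[\theta_i]|$ or $|A(i)|$, so both requirements of the bivalued cardinality condition hold, and $\vr$ is implementable.

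The only real step is Phase 1, which is checking that the prefix condition suffices to place preferred goods. The nested initial segments make this immediate: it is a Hall-type condition that reduces to prefix sums because the segments are nested. The construction is clearly polynomial time.
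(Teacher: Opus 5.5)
Your proof is correct, and it takes a more elementary, constructive route than the paper. The paper gives each agent $r_i$ preferred slots adjacent to $[\theta_i]$ and $f_i(r_i)-r_i$ unrestricted slots adjacent to $[m]$, then checks Hall's condition in two cases and invokes Hall's theorem. A slot set containing an unrestricted slot is handled by the total-demand condition. A set of preferred slots only is handled by the prefix-capacity condition at the largest agent index present, using $\theta_1\leq\cdots\leq\theta_n$ to identify its neighborhood as $[\theta_k]$.

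You instead write the matching down directly. The preferred quotas get consecutive blocks of the common initial segment, and the rest is an arbitrary fill. The count $m-\sum_i r_i\geq\sum_i\bigl(f_i(r_i)-r_i\bigr)$ replaces the paper's first Hall case. This gives a linear-time construction with no matching algorithm. It also shows that agent $i$'s block needs only the prefix condition at $k=i$.

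In fact your Phase 1 never uses the ordering of thresholds. It uses only that each preferred set is an initial segment $[\theta_i]$ of the common IDO order. So your remark that this step uses the ordering overstates what is needed; the ordering matters only for the prefix condition being the natural (necessary) one.

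The paper's Hall formulation treats preferred and unrestricted demands uniformly in a single bipartite graph. It also mirrors the Hall argument used later for chores, where the matching must cover the items rather than the slots.
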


\begin{proof}
For every agent $i\in[n]$, create $r_i$ \emph{preferred slots} and $f_i(r_i)-r_i$ \emph{unrestricted slots}. We construct a bipartite graph whose left-hand vertices are these slots and whose right-hand vertices are the goods in $[m]$. Each preferred slot of agent $i$ is adjacent to every good in $[\theta_i]$, whereas each unrestricted slot is adjacent to every good in $[m]$.

We verify Hall's condition. Let $X$ be a nonempty subset of slots. If $X$ contains an unrestricted slot, then its neighborhood is $[m]$. By the total-demand condition,
$$
|X|\leq\sum_{i=1}^{n}f_i(r_i)\leq m=|N(X)|.
$$
Otherwise, $X$ contains only preferred slots. Let $k$ be the largest index of an agent whose preferred slot belongs to $X$. Since $\theta_1\leq\theta_2\leq\cdots\leq\theta_n$, the neighborhood of $X$ is $[\theta_k]$. By the prefix-capacity condition,
$$
|X|\leq\sum_{i=1}^{k}r_i\leq\theta_k=|N(X)|.
$$
Thus, Hall's condition holds for every subset of slots, and Hall's theorem guarantees a matching that covers all slots.

For each matched slot, allocate its matched good to the associated agent. The matched goods give agent $i$ exactly $f_i(r_i)$ goods, including the $r_i$ goods assigned to her preferred slots, all of which lie in $[\theta_i]$. Allocate any unmatched goods arbitrarily; doing so can only increase both cardinality requirements. The resulting allocation satisfies the bivalued cardinality requirement induced by $\vr$, so $\vr$ is implementable.
\end{proof}

\paragraph{The best of $n$ worlds.}
The key idea for finding an implementable preferred-quota vector is to construct $n$ candidate vectors $\vr^{(1)},\ldots,\vr^{(n)}$, which we view as $n$ different \emph{worlds}. In world $c\in[n]$, the entry $r_i^{(c)}$ denotes the preferred quota of agent $i$. We collect these entries into an $n\times n$ preferred-quota table $\mathcal{R}=(r_i^{(c)})_{i,c\in[n]}$. Throughout, rows index agents and columns index worlds: row $i$ records agent $i$'s quotas across all worlds, while column $c$ is the preferred-quota vector $\vr^{(c)}$ of world $c$.

We maintain two invariants across the $n$ worlds. First, every world satisfies the prefix-capacity condition:
$$
\sum_{i=1}^{k}r_i^{(c)}\leq\theta_k
\qquad
\text{for every }c,k\in[n].
$$
Second, the aggregate demand across all worlds satisfies
$$
\sum_{c=1}^{n}\sum_{i=1}^{n}f_i(r_i^{(c)})\leq nm+n-1.
$$
Since every $f_i(r_i^{(c)})$ is integral, there must exist a world $c^\star\in[n]$ such that
$$
\sum_{i=1}^{n}f_i(r_i^{(c^\star)})\leq m.
$$
Indeed, if every world had demand at least $m+1$, then the aggregate demand would be at least $nm+n$, contradicting the second invariant. Therefore, $\vr^{(c^\star)}$ satisfies both conditions in \Cref{lem:implementable_condition}. It is consequently implementable and yields an exact MMS allocation.

\paragraph{Fractional construction and prefix-preserving rounding.}
It remains to construct an integral $n$-world preferred-quota table satisfying the two invariants. In \Cref{sec:fractional-demand}, we replace the original demand function $f_i$ with its lower convex envelope $g_i$, which satisfies $g_i(r)\leq f_i(r)<g_i(r)+1$ at every integral quota $r$. This relaxation admits a fractional table that satisfies the prefix-capacity condition in every world and has total relaxed demand at most $nm$.

In \Cref{sec:extreme-point-solution}, we preserve each agent's total relaxed demand while moving to an extreme point of a weighted transportation polytope. Its non-bound entries form a forest, whereas every bound entry is an integral quota at which $f_i$ and $g_i$ coincide. Finally, \Cref{sec:prefix-rounding} uses a max-flow argument to round the non-bound entries while preserving every prefix-capacity condition. Each such entry incurs an additive gap of less than one, and the forest structure bounds the total gap by $n-1$. Consequently, the resulting integral table $\hat r_i^{(c)}$ satisfies
$$
\sum_{c=1}^{n}\sum_{i=1}^{n}f_i(\hat r_i^{(c)})\leq nm+n-1.
$$
Together with the best-of-$n$-worlds argument, this yields an implementable preferred-quota vector and hence an exact MMS allocation.

\section{The Lower Convex Envelope of Demand}
\label{sec:fractional-demand}

In this section, we introduce a relaxed demand function for fractional preferred quotas. Recall that, for every agent $i\in[n]$ and every integral preferred quota $r\in\{0,\ldots,\theta_i\}$, the original demand is
$$
f_i(r)=\left\lceil r+\max\{\MMS_i-r\bv_i,0\}\right\rceil.
$$
We begin with the uniform fractional table defined by $r_i^{(c)}=\theta_i/n$ for every agent $i\in[n]$ and every world $c\in[n]$. This table satisfies the prefix-capacity condition in every world. Indeed, for every $c,k\in[n]$,
$$
\sum_{i=1}^{k}r_i^{(c)}=\frac{1}{n}\sum_{i=1}^{k}\theta_i\leq\frac{k}{n}\theta_k\leq\theta_k,
$$
where the first inequality follows from $\theta_1\leq\cdots\leq\theta_n$.

It remains to define a suitable demand function for fractional quotas. Directly extending the ceiling-based formula is problematic because the uniform table may violate the desired aggregate demand bound. Indeed, the ceiling is applied independently to each agent--world pair and can introduce an additive overhead arbitrarily close to one at each of the $n^2$ entries. Thus, this direct extension may violate the desired bound even before rounding.

A first attempt is to remove the ceiling and consider the fractional demand function
$$
\ell_i(r):=r+\max\{\MMS_i-r\bv_i,0\}.
$$
This relaxation is sufficient for bounding the uniform table. Since the average value of all goods upper-bounds the MMS benchmark, we have
$$
\MMS_i\leq\frac{\theta_i\bv_i+m-\theta_i}{n}.
$$
It follows that
$$
n\ell_i\left(\frac{\theta_i}{n}\right)=\theta_i+\max\{n\MMS_i-\theta_i\bv_i,0\}\leq m.
$$
Thus, under $\ell_i$, each agent's total fractional demand across the $n$ worlds is at most $m$. Summing over all agents gives a bound of $nm$.

The limitation of $\ell_i$ becomes apparent during rounding, when we must round the quotas to integers and reverse the relaxation by replacing $\ell_i$ with the original demand function $f_i$. For every integral quota $r$, we have
$$
f_i(r)=\lceil\ell_i(r)\rceil
\qquad\text{and hence}\qquad
0\leq f_i(r)-\ell_i(r)<1~.
$$
Thus, each entry incurs less than one unit of additional demand. However, this gap may be positive at many entries and can therefore accumulate across many agent--world pairs.

We thus seek a convex relaxation that is more suitable for recovering an integral solution. In particular, we would like the relaxation to coincide with $f_i$ at more integral quotas, since every such quota incurs no additive gap when we pass back to the original demand. This motivates the lower convex envelope of the integral demand points.

For each agent $i\in[n]$, consider the points $\{(r,f_i(r)):r\in\{0,\ldots,\theta_i\}\}$. We take their lower convex hull and connect consecutive vertices on its lower boundary by line segments. For every $r\in[0,\theta_i]$, let $g_i(r)$ denote the height of the resulting polygonal curve at $r$. We refer to $g_i$ as the \emph{relaxed demand function} of agent $i$. Equivalently, $g_i$ is the lower convex envelope of the integral demand points. We call an integral quota $r$ \emph{tight} if $g_i(r)=f_i(r)$.

The following lemma records two basic properties of the lower convex envelope. The first bounds the gap between the relaxed and original demand functions at every integral quota, while the second characterizes the quotas at which this gap is zero.

\begin{lemma}
\label{lem:lower_convex_envelope}
For every agent $i\in[n]$ and every integral quota $r\in\{0,\ldots,\theta_i\}$, the following hold:
\begin{enumerate}[label=\textup{(\roman*)}]
    \item $g_i(r)\leq f_i(r)<g_i(r)+1$;
    \item $g_i(r)=f_i(r)$ if and only if $(r,f_i(r))$ lies on the lower boundary of the convex hull. In particular, the endpoints of every maximal linear segment of $g_i$ are integral tight quotas.
\end{enumerate}
\end{lemma}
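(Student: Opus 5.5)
The plan is to prove (i) by comparing $g_i$ with two auxiliary functions: the fractional demand $\ell_i$ from below and the defining property of the lower convex envelope from above. The upper inequality $g_i(r)\leq f_i(r)$ is immediate from the construction. The lower boundary of the convex hull of the points $\{(r,f_i(r))\}$ lies weakly below every point of the hull, and in particular below each generating point. For the strict inequality $f_i(r)<g_i(r)+1$, I will use that $\ell_i(r)=r+\max\{\MMS_i-r\bv_i,0\}$ is convex on $[0,\theta_i]$, as the maximum of two affine functions. It also satisfies $\ell_i(r')\leq\lceil\ell_i(r')\rceil=f_i(r')$ at every integral $r'$. Hence $\ell_i$ is a convex function lying below all the generating points, so its epigraph contains their convex hull. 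The lower envelope therefore satisfies $g_i\geq\ell_i$ on $[0,\theta_i]$. Concretely, $g_i(r)$ is a convex combination $\sum_t\lambda_t f_i(r_t)$ with $\sum_t\lambda_t r_t=r$, and Jensen gives $\sum_t\lambda_t f_i(r_t)\geq\sum_t\lambda_t\ell_i(r_t)\geq\ell_i(r)$. Combining this with $f_i(r)=\lceil\ell_i(r)\rceil<\ell_i(r)+1$ yields $f_i(r)<\ell_i(r)+1\leq g_i(r)+1$.

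For (ii), the first claim is definitional, but I will state it carefully. By definition $g_i(r)$ is the height of the lower boundary of the hull at abscissa $r$. Because the hull is a polygon whose vertical line at $r$ meets it in the segment $[g_i(r),\max]$, the point $(r,f_i(r))$ lies on the lower boundary exactly when $f_i(r)$ equals the minimum height $g_i(r)$.

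For the ``in particular'' claim, I take a maximal linear segment of $g_i$ on an interval $[\alpha,\beta]$. Its endpoints are vertices of the lower hull, or the extreme abscissae $0$ and $\theta_i$; this uses maximality and the fact that $g_i$ is defined only on $[0,\theta_i]$. Every vertex of the convex hull of a finite point set is one of the generating points. Hence $\alpha$ and $\beta$ are integers in $\{0,\ldots,\theta_i\}$ with $(\alpha,f_i(\alpha))$ on the lower boundary, and likewise for $\beta$, so both are tight. The endpoints $0$ and $\theta_i$ need a separate check: the leftmost and rightmost points of the hull are the generating points with abscissa $0$ and $\theta_i$, and the lowest of those is $(0,f_i(0))$, resp.\ $(\theta_i,f_i(\theta_i))$. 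The only delicate step is this careful handling of hull vertices versus boundary points, which I expect to be routine.
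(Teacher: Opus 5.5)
Your proposal is correct and follows essentially the same route as the paper. You sandwich $\ell_i\le g_i\le f_i$ using the convexity of $\ell_i$ and the fact that the lower envelope dominates every convex minorant of the demand points, then use $f_i=\lceil\ell_i\rceil$, and for (ii) you use that the vertices of the lower hull are generating points. Your explicit treatment of the domain endpoints $0$ and $\theta_i$ is just slightly more careful than the paper's one-line remark.
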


\begin{proof}
Fix an agent $i\in[n]$. By definition, the graph of $g_i$ is the lower boundary of the convex hull of the integral demand points. Therefore, for every integral quota $r$, this boundary lies at or below $(r,f_i(r))$, and hence $g_i(r)\leq f_i(r)$.

To prove the strict upper bound, recall that $\ell_i$ is convex and satisfies $\ell_i(r)\leq f_i(r)$ at every integral quota. Since $g_i$ is the largest convex function lying below the integral demand points, we have $\ell_i(r)\leq g_i(r)$ for every $r\in[0,\theta_i]$. Consequently, for every integral quota $r$,
$$
f_i(r)=\lceil\ell_i(r)\rceil<\ell_i(r)+1\leq g_i(r)+1.
$$
This proves the first property.

For the second property, the geometric definition of $g_i$ implies that $g_i(r)=f_i(r)$ precisely when $(r,f_i(r))$ lies on the lower boundary of the convex hull. Otherwise, the point lies strictly above this boundary, and hence $g_i(r)<f_i(r)$. Every vertex of the lower hull is one of the original demand points and therefore has an integral first coordinate.
\end{proof}

The two properties in \Cref{lem:lower_convex_envelope} play complementary roles in the rounding analysis. The first guarantees that replacing $g_i$ with $f_i$ at any integral quota incurs an additive gap strictly smaller than one. The second identifies the tight quotas, at which this gap is zero. In particular, the endpoints of every maximal linear segment of $g_i$ incur no gap.

The geometric definition of $g_i$ also admits a useful mixing interpretation. A fractional quota $r$ can be viewed as a mixture of integral quotas. If an integral quota $r'$ is assigned weight $\lambda_{r'}$, then $\sum_{r'}\lambda_{r'}r'$ is the average quota and $\sum_{r'}\lambda_{r'}f_i(r')$ is the corresponding average original demand. Equivalently,
$$
g_i(r)=\min\left\{\sum_{r'=0}^{\theta_i}\lambda_{r'}f_i(r'):\sum_{r'=0}^{\theta_i}\lambda_{r'}r'=r,\ \sum_{r'=0}^{\theta_i}\lambda_{r'}=1,\ \lambda_{r'}\geq0\text{ for every }r'\right\}.
$$
Therefore, $g_i(r)$ is the minimum average original demand achievable by mixing integral quotas whose average is $r$. In this sense, $g_i$ does not simply remove the ceiling; it is the tightest convex relaxation induced by the original demands at integral quotas. It is precisely this mixing interpretation that allows us to use the bundle quotas in an MMS partition, which have average $\theta_i/n$, to establish the following per-agent bound for the uniform fractional table.

\begin{lemma}
\label{lem:uniform_relaxed_demand}
For every agent $i\in[n]$, the uniform fractional table satisfies
$$
\sum_{c=1}^{n}g_i\left(\frac{\theta_i}{n}\right)=n g_i\left(\frac{\theta_i}{n}\right)\leq m.
$$
\end{lemma}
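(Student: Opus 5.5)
Take an MMS partition for agent $i$ and use the mixing characterization of $g_i$, with the bundles' preferred counts as the integral quotas being mixed.

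\textbf{Choosing the partition.} Let $(B_1,\ldots,B_n)$ be a partition of $[m]$ attaining $\MMS_i$ under the normalized values, so $v_i(B_c)\geq\MMS_i$ for every $c$. Set $s_c:=|B_c\cap[\theta_i]|\in\{0,\ldots,\theta_i\}$, the number of goods in $B_c$ that agent $i$ prefers. Then
$$
\sum_{c=1}^{n}s_c=\theta_i
\qquad\text{and}\qquad
\sum_{c=1}^{n}|B_c|=m.
$$

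\textbf{Bounding each bundle's demand.} I claim that $f_i(s_c)\leq|B_c|$ for every $c$. The value of $B_c$ is
$$
v_i(B_c)=s_c\bv_i+(|B_c|-s_c).
$$
From $v_i(B_c)\geq\MMS_i$ we get $|B_c|-s_c\geq\MMS_i-s_c\bv_i$. Also $|B_c|-s_c\geq0$ trivially. Hence
$$
|B_c|\geq s_c+\max\{\MMS_i-s_c\bv_i,0\}=\ell_i(s_c).
$$
Since $|B_c|$ is an integer, it is at least the ceiling, which is $f_i(s_c)$.

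\textbf{Mixing.} Put weight $\lambda_{r'}:=\frac{1}{n}\cdot\#\{c:s_c=r'\}$ on each integral quota $r'$. These weights are nonnegative, sum to $1$, and average to $\sum_c s_c/n=\theta_i/n$. By the mixing characterization of $g_i$ (equivalently, by convexity of $g_i$ together with $g_i\leq f_i$ at integral points, Lemma~\ref{lem:lower_convex_envelope}(i)),
$$
g_i\left(\frac{\theta_i}{n}\right)\leq\frac{1}{n}\sum_{c=1}^{n}f_i(s_c)\leq\frac{1}{n}\sum_{c=1}^{n}|B_c|=\frac{m}{n}.
$$
Multiplying by $n$ gives the claim.

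There is no real obstacle here. The only points to check are that the $s_c$ lie in the domain $\{0,\ldots,\theta_i\}$, which holds because each is bounded by $\theta_i$, and that the integrality of $|B_c|$ is used to absorb the ceiling in $f_i$.
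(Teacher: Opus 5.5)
Your proposal is correct and matches the paper's proof: you fix an MMS partition, use integrality of $|B_c|$ to obtain $f_i(s_c)\leq|B_c|$, and apply the mixing interpretation of $g_i$ with weight $1/n$ on each bundle's preferred count, whose average is $\theta_i/n$. The domain check $s_c\in\{0,\ldots,\theta_i\}$ that you include is a detail the paper leaves implicit.
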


\begin{proof}
Fix an agent $i\in[n]$ and an MMS partition $(B_i^{(1)},\ldots,B_i^{(n)})$ such that $v_i(B_i^{(c)})\geq\MMS_i$ for every $c\in[n]$. Let $\bar r_i^{(c)}:=|B_i^{(c)}\cap[\theta_i]|$ be the number of preferred goods in $B_i^{(c)}$. Since the bundles form a partition of $[m]$, every preferred good appears in exactly one bundle, and hence
$$
\sum_{c=1}^{n}\bar r_i^{(c)}=\theta_i.
$$
For every $c\in[n]$, we have
$$
v_i(B_i^{(c)})=\bar r_i^{(c)}\bv_i+\bigl(|B_i^{(c)}|-\bar r_i^{(c)}\bigr)\geq\MMS_i.
$$
The quantity $|B_i^{(c)}|-\bar r_i^{(c)}$ is a nonnegative integer. It follows that
$$
|B_i^{(c)}|\geq\left\lceil\bar r_i^{(c)}+\max\{\MMS_i-\bar r_i^{(c)}\bv_i,0\}\right\rceil=f_i(\bar r_i^{(c)}).
$$
Summing over all bundles gives
$$
\sum_{c=1}^{n}f_i(\bar r_i^{(c)})\leq\sum_{c=1}^{n}|B_i^{(c)}|=m.
$$
The quotas $\bar r_i^{(1)},\ldots,\bar r_i^{(n)}$, each assigned weight $1/n$, form a mixture of integral quotas with average $\theta_i/n$. By the mixing interpretation of $g_i$,
$$
g_i\left(\frac{\theta_i}{n}\right)\leq\frac{1}{n}\sum_{c=1}^{n}f_i(\bar r_i^{(c)})\leq\frac{m}{n}.
$$
Therefore, $n g_i(\theta_i/n)\leq m$, as desired.
\end{proof}

We next preserve the per-agent bound in \Cref{lem:uniform_relaxed_demand} while moving to a sparse fractional quota table. Its sparsity will control the additive gap incurred by integral rounding.

\section{A Fractional Solution for Low-Loss Rounding}
\label{sec:extreme-point-solution}

In this section, we construct a fractional preferred-quota table suitable for rounding. The table satisfies the prefix-capacity condition in every world and preserves each agent's total relaxed demand from the uniform table. We choose an extreme point of a weighted transportation polytope so that only a sparse collection of entries lies strictly between its bounds; only these entries can incur a gap when $g_i$ is replaced by $f_i$ after rounding.

\paragraph{Restricting each row to one affine segment.}
Recall that $g_i$ is convex and piecewise linear. If $\theta_i=0$, set $\alpha_i=\beta_i=0$. If $\theta_i/n$ is a vertex of the graph of $g_i$, including an endpoint of its domain, set $\alpha_i=\beta_i=\theta_i/n$. Otherwise, let $[\alpha_i,\beta_i]$ be the domain of the unique maximal linear segment whose relative interior contains $\theta_i/n$. By \Cref{lem:lower_convex_envelope}, both endpoints are integral tight quotas; that is,
$$
g_i(\alpha_i)=f_i(\alpha_i)
\qquad\text{and}\qquad
g_i(\beta_i)=f_i(\beta_i).
$$
The reason for restricting the quotas of agent $i$ to this segment is that $g_i$ is affine on $[\alpha_i,\beta_i]$. We may therefore redistribute the preferred quotas of agent $i$ among the $n$ worlds without changing the sum of their $g_i$-values, provided that every quota remains in this interval and their sum remains $\theta_i$. More precisely, if $\alpha_i\leq r_i^{(c)}\leq\beta_i$ for every $c\in[n]$ and $\sum_{c=1}^{n}r_i^{(c)}=\theta_i$, then
$$
\sum_{c=1}^{n}g_i(r_i^{(c)})=n g_i\left(\frac{\theta_i}{n}\right).
$$
Thus, the line-segment restriction allows us to move away from the uniform table while preserving each agent's total relaxed demand across all worlds.

\paragraph{A baseline feasible polytope.}
We first consider the polytope containing all fractional preferred-quota tables that satisfy per-agent quota conservation, per-world prefix capacity, and the line-segment restrictions:
$$
\mathcal{P}:=
\left\{
(r_i^{(c)})_{i,c\in[n]}
\;\middle|\;
\begin{aligned}
\sum_{c=1}^{n}r_i^{(c)} &= \theta_i
&& \text{for every }i\in[n],\\
\sum_{i=1}^{k}r_i^{(c)} &\leq \theta_k
&& \text{for every }c,k\in[n],\\
\alpha_i\leq r_i^{(c)} &\leq \beta_i
&& \text{for every }i,c\in[n]
\end{aligned}
\right\}.
$$
The uniform fractional table belongs to $\mathcal{P}$, so it is nonempty. More importantly, every table in $\mathcal{P}$ preserves the per-agent bound obtained from the uniform table under $g_i$.

\begin{lemma}[Preservation under the relaxed demand function]
\label{lem:preserved_relaxed_demand}
For every table $(r_i^{(c)})_{i,c\in[n]}\in\mathcal{P}$ and every agent $i\in[n]$,
$$
\sum_{c=1}^{n}g_i(r_i^{(c)})=n g_i\left(\frac{\theta_i}{n}\right)\leq m.
$$
\end{lemma}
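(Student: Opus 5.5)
The plan is to use the fact that $g_i$ is affine on $[\alpha_i,\beta_i]$, which makes its sum over the $n$ worlds depend only on the row sum. That row sum is fixed at $\theta_i$ by the conservation constraint of $\mathcal{P}$. The inequality then follows directly from \Cref{lem:uniform_relaxed_demand}.

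First handle the degenerate case $\alpha_i=\beta_i$. This covers both $\theta_i=0$ and the case where $\theta_i/n$ is a vertex of the graph of $g_i$. Here the box constraint forces $r_i^{(c)}=\alpha_i$ for every $c$. Conservation gives $n\alpha_i=\theta_i$, so $\alpha_i=\theta_i/n$. Hence every entry of row $i$ equals $\theta_i/n$, and $\sum_c g_i(r_i^{(c)})=n g_i(\theta_i/n)$ trivially.

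Now take $\alpha_i<\beta_i$. By construction, $[\alpha_i,\beta_i]$ is the domain of a maximal linear segment of $g_i$, so there are constants $s_i,t_i$ with $g_i(r)=s_i r+t_i$ for all $r\in[\alpha_i,\beta_i]$. Every entry $r_i^{(c)}$ lies in this interval, which gives
$$
\sum_{c=1}^{n}g_i(r_i^{(c)})=s_i\sum_{c=1}^{n}r_i^{(c)}+nt_i=s_i\theta_i+nt_i=n\left(s_i\frac{\theta_i}{n}+t_i\right)=n g_i\left(\frac{\theta_i}{n}\right).
$$
The last equality uses that $\theta_i/n$ lies in the relative interior of the segment, so the same affine formula applies there. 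The final bound $n g_i(\theta_i/n)\le m$ is exactly \Cref{lem:uniform_relaxed_demand}.

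None of these steps is a real obstacle. The only point needing care is the degenerate case: there the claim must not rely on any affine formula, and one must check that conservation pins every entry to $\theta_i/n$.
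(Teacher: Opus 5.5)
Your proposal is correct and matches the paper's proof. The paper also uses that $g_i$ is affine on $[\alpha_i,\beta_i]$ together with the row-sum constraint to get $\sum_c g_i(r_i^{(c)})=n g_i\bigl(\tfrac1n\sum_c r_i^{(c)}\bigr)=n g_i(\theta_i/n)$, and then invokes \Cref{lem:uniform_relaxed_demand}. Your separate treatment of the case $\alpha_i=\beta_i$ is harmless but not needed, since a function is trivially affine on a single point and the one-line argument already covers it.
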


\begin{proof}
Fix an agent $i\in[n]$. By the definition of $\mathcal{P}$, every $r_i^{(c)}$ lies in $[\alpha_i,\beta_i]$, on which $g_i$ is affine. Since $\sum_{c=1}^{n}r_i^{(c)}=\theta_i$, we obtain
$$
\sum_{c=1}^{n}g_i(r_i^{(c)})
=
n g_i\left(\frac{1}{n}\sum_{c=1}^{n}r_i^{(c)}\right)
=
n g_i\left(\frac{\theta_i}{n}\right).
$$
The inequality follows from \Cref{lem:uniform_relaxed_demand}.
\end{proof}

Although every point in $\mathcal{P}$ satisfies the required prefix-capacity conditions and per-agent bounds under $g_i$, not every point is equally suitable for rounding. We call an entry $r_i^{(c)}$ a \emph{bound entry} if $r_i^{(c)}\in\{\alpha_i,\beta_i\}$ and a \emph{non-bound entry} otherwise. Since $\alpha_i$ and $\beta_i$ are integral tight quotas, every bound entry is integral and satisfies $f_i(r_i^{(c)})=g_i(r_i^{(c)})$. Thus, bound entries neither change during rounding nor incur a gap when we pass back from $g_i$ to $f_i$. By contrast, a non-bound entry need not be a tight quota and may therefore incur a positive gap when we pass back to $f_i$, even if the entry itself is integral. Our goal is therefore to find a point in $\mathcal{P}$ whose non-bound entries form a sparse structure.

\paragraph{A weighted transportation polytope.}
The polytope $\mathcal{P}$ alone does not provide the desired sparsity. We therefore replace its prefix-capacity constraints with a stronger per-world weighted-balance condition that gives the polytope a transportation structure. This structure allows us to exploit the fact that an extreme point has only a sparse collection of non-bound entries. Define
$$
\lambda:=\frac{1}{n}\sum_{\substack{i\in[n]\\\theta_i>0}}\frac{\theta_i-n\alpha_i}{\theta_i},
$$
and consider the following smaller polytope:
$$
\mathcal{Q}:=
\left\{
(r_i^{(c)})_{i,c\in[n]}
\;\middle|\;
\begin{aligned}
\sum_{c=1}^{n}r_i^{(c)} &= \theta_i
&& \text{for every }i\in[n],\\
\sum_{\substack{i\in[n]\\\theta_i>0}}\frac{r_i^{(c)}-\alpha_i}{\theta_i} &= \lambda
&& \text{for every }c\in[n],\\
\alpha_i\leq r_i^{(c)} &\leq \beta_i
&& \text{for every }i,c\in[n]
\end{aligned}
\right\}.
$$
We call $\mathcal{Q}$ a \emph{weighted transportation polytope}. The first family of constraints fixes the total quota of each agent, while the second balances the normalized residual quotas in every world. For every agent with $\theta_i>0$, shift each entry by $\alpha_i$ and scale it by $1/\theta_i$. After omitting the rows with $\theta_i=0$, which are fixed at zero, the resulting variables have fixed row and column sums together with lower and upper bounds. Thus, $\mathcal{Q}$ is affinely equivalent to a capacitated transportation polytope.

The uniform fractional table belongs to $\mathcal{Q}$ because, for every world $c\in[n]$,
$$
\sum_{\substack{i\in[n]\\\theta_i>0}}\frac{\theta_i/n-\alpha_i}{\theta_i}=\lambda.
$$
Hence, $\mathcal{Q}$ is nonempty. The following lemma shows that the per-world weighted-balance condition implies every prefix-capacity condition and therefore that $\mathcal{Q}\subseteq\mathcal{P}$.

\begin{lemma}[Weighted balance implies prefix capacity]
\label{lem:weighted_balance_prefix}
Every table $(r_i^{(c)})_{i,c\in[n]}\in\mathcal{Q}$ belongs to $\mathcal{P}$.
\end{lemma}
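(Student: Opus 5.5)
The plan is to note that $\mathcal{Q}$ and $\mathcal{P}$ share the row-sum constraints $\sum_{c}r_i^{(c)}=\theta_i$ and the segment bounds $\alpha_i\leq r_i^{(c)}\leq\beta_i$. So the only thing to check is that the per-world weighted-balance equation implies every prefix-capacity inequality $\sum_{i=1}^{k}r_i^{(c)}\leq\theta_k$. Fix a table in $\mathcal{Q}$, a world $c$ and an index $k\in[n]$. Rows with $\theta_i=0$ have $\alpha_i=\beta_i=0$, so those entries are zero and can be dropped from every sum. For the remaining rows, introduce the normalized residuals $x_i:=(r_i^{(c)}-\alpha_i)/\theta_i$. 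The lower bound $r_i^{(c)}\geq\alpha_i$ gives $x_i\geq0$, and the balance condition gives $\sum_{i:\theta_i>0}x_i=\lambda$.

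The key steps, in order, are as follows. First, write $\sum_{i\leq k}r_i^{(c)}=\sum_{i\leq k,\,\theta_i>0}(\alpha_i+\theta_i x_i)$. Second, use the ordering $\theta_i\leq\theta_k$ for $i\leq k$ together with $x_i\geq0$ to bound the residual part:
$$\sum_{i\leq k,\,\theta_i>0}\theta_i x_i\leq\theta_k\sum_{i\leq k,\,\theta_i>0}x_i\leq\theta_k\sum_{i:\theta_i>0}x_i=\theta_k\lambda.$$
Here nonnegativity of all $x_i$ is what lets the partial sum be bounded by the full column sum $\lambda$. Third, bound $\lambda$ from its definition:
$$\lambda=\frac{|\{i:\theta_i>0\}|}{n}-\sum_{i:\theta_i>0}\frac{\alpha_i}{\theta_i}\leq1-\sum_{i:\theta_i>0}\frac{\alpha_i}{\theta_i}.$$

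Finally, combine the estimates. The term $\theta_k\lambda$ is at most $\theta_k-\sum_{i\leq k,\,\theta_i>0}\alpha_i\,\theta_k/\theta_i$, after discarding the nonnegative terms with $i>k$ (recall $\alpha_i\geq0$). Since $\theta_k/\theta_i\geq1$ for $i\leq k$, this is at most $\theta_k-\sum_{i\leq k}\alpha_i$. Adding back $\sum_{i\leq k}\alpha_i$ yields $\sum_{i\leq k}r_i^{(c)}\leq\theta_k$, as required.

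The only delicate point, which I expect to be the main obstacle, is this last combination. The weights $1/\theta_i$ must be compared against $1/\theta_k$ in two opposite directions. For the residuals we use $\theta_i\leq\theta_k$ to scale up. For the $\alpha_i$ terms we need the ratio $\theta_k/\theta_i\geq1$ so that the subtracted amount dominates $\sum_{i\leq k}\alpha_i$. Both directions follow from the monotone indexing $\theta_1\leq\cdots\leq\theta_n$, which is exactly why the normalization by $\theta_i$ was chosen. The edge case $\theta_k=0$ is trivial, since then all $\theta_i$ with $i\leq k$ vanish and the prefix sum is $0$.
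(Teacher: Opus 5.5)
Your proposal is correct and follows essentially the same argument as the paper. Both bound the residual prefix sum by $\theta_k\lambda$ using $\theta_i\le\theta_k$ and nonnegativity of the normalized residuals, then show $\lambda\le 1-\frac{1}{\theta_k}\sum_{i\le k}\alpha_i$ by discarding the terms with $i>k$ and using $\theta_k/\theta_i\ge 1$, with the case $\theta_k=0$ handled separately.
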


\begin{proof}
The per-agent quota-conservation and line-segment constraints of $\mathcal{P}$ follow directly from the definition of $\mathcal{Q}$. It remains to verify the prefix-capacity conditions.

Fix a world $c\in[n]$ and a prefix $k\in[n]$. If $\theta_k=0$, then $\theta_i=0$ for every $i\leq k$. Consequently, $\alpha_i=\beta_i=r_i^{(c)}=0$ for every $i\leq k$, and the prefix-capacity condition holds immediately. Suppose that $\theta_k>0$. Since $\theta_i\leq\theta_k$ for every $i\leq k$ and $r_i^{(c)}-\alpha_i\geq0$, the per-world weighted-balance condition gives
$$
\sum_{i=1}^{k}\bigl(r_i^{(c)}-\alpha_i\bigr)
=
\sum_{\substack{i\leq k\\\theta_i>0}}
\theta_i\frac{r_i^{(c)}-\alpha_i}{\theta_i}
\leq
\theta_k
\sum_{\substack{i\leq k\\\theta_i>0}}
\frac{r_i^{(c)}-\alpha_i}{\theta_i}
\leq
\theta_k\lambda.
$$
Moreover, by the definition of $\lambda$,
$$
\lambda
=
\frac{|\{i\in[n]:\theta_i>0\}|}{n}
-
\sum_{\substack{i\in[n]\\\theta_i>0}}\frac{\alpha_i}{\theta_i}
\leq
1-
\sum_{\substack{i\leq k\\\theta_i>0}}\frac{\alpha_i}{\theta_i}
\leq
1-\frac{1}{\theta_k}\sum_{i=1}^{k}\alpha_i.
$$
The first inequality follows because $|\{i\in[n]:\theta_i>0\}|/n\leq1$ and the omitted terms are nonnegative. The second follows from $\theta_i\leq\theta_k$ for every $i\leq k$. Therefore,
$$
\sum_{i=1}^{k}r_i^{(c)}
=
\sum_{i=1}^{k}\alpha_i
+
\sum_{i=1}^{k}\bigl(r_i^{(c)}-\alpha_i\bigr)
\leq
\sum_{i=1}^{k}\alpha_i+\theta_k\lambda
\leq
\theta_k.
$$
Thus, every prefix-capacity condition is satisfied, and hence $\mathcal{Q}\subseteq\mathcal{P}$.
\end{proof}

\paragraph{Sparsity at an extreme point.}
Since $\mathcal{Q}$ is a nonempty bounded polytope, it has an extreme point. Fix any extreme point
$$
(\tilde r_i^{(c)})_{i,c\in[n]}\in\mathcal{Q}.
$$
By \Cref{lem:weighted_balance_prefix}, this extreme point belongs to $\mathcal{P}$ and therefore satisfies every prefix-capacity condition. By \Cref{lem:preserved_relaxed_demand}, it also satisfies
$$
\sum_{c=1}^{n}g_i(\tilde r_i^{(c)})
=
n g_i\left(\frac{\theta_i}{n}\right)
\leq m
\qquad
\text{for every }i\in[n].
$$
We use this extreme point as the fractional preferred-quota table to be rounded.

The advantage of choosing an extreme point is that its non-bound entries have a sparse structure. Construct a bipartite graph whose left-hand vertices are the agents, whose right-hand vertices are the worlds, and whose edges correspond precisely to the non-bound entries; that is, $(i,c)$ is an edge if $\alpha_i<\tilde r_i^{(c)}<\beta_i$.

The next lemma is the bounded-variable counterpart of the classical forest characterization of transportation bases \citep{dantzig1951application}. In the classical setting, a cycle of positive entries admits an alternating perturbation. Here, the same argument applies to entries strictly between their bounds, after scaling the perturbation in row $i$ by $\theta_i$.

\begin{lemma}[Extreme-point forest]
\label{lem:extreme_point_forest}
The bipartite graph of non-bound entries is a forest.
\end{lemma}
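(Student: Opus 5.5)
We argue by contradiction: if the non-bound graph contains a cycle, we exhibit a nonzero perturbation $\Delta$ such that both $\tilde r+\varepsilon\Delta$ and $\tilde r-\varepsilon\Delta$ lie in $\mathcal{Q}$ for some small $\varepsilon>0$. Then $\tilde r$ is the midpoint of two distinct points of $\mathcal{Q}$, contradicting that it is an extreme point.

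First, we note that every agent with $\theta_i=0$ has $\alpha_i=\beta_i=0$, so all her entries are bound entries and she is isolated in the graph. Hence every edge $(i,c)$ has $\theta_i>0$. Since the graph is bipartite, any cycle has even length and alternates between agents and worlds. Write it as
$$
i_1,c_1,i_2,c_2,\ldots,i_t,c_t,i_1,
$$
with $t\geq2$, distinct agents $i_1,\ldots,i_t$, and distinct worlds $c_1,\ldots,c_t$. Its edges are $(i_s,c_s)$ and $(i_{s+1},c_s)$, with indices taken modulo $t$.

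Define $\Delta$ on the cycle edges by
$$
\Delta_{i_s}^{(c_s)}=+\theta_{i_s},
\qquad
\Delta_{i_{s+1}}^{(c_s)}=-\theta_{i_{s+1}},
$$
and set $\Delta=0$ on every other entry. This is the $\theta_i$-scaled alternating perturbation.

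We now check that both perturbed tables satisfy the constraints of $\mathcal{Q}$.
\begin{itemize}
\item \textbf{Row sums.} Each agent $i_s$ lies on exactly two cycle edges, $(i_s,c_s)$ with entry $+\theta_{i_s}$ and $(i_s,c_{s-1})$ with entry $-\theta_{i_s}$. So every row sum of $\Delta$ is zero, and quota conservation is preserved.
\item \textbf{Weighted world balance.} In world $c_s$, the perturbation changes $\sum_i (r_i^{(c_s)}-\alpha_i)/\theta_i$ by $\theta_{i_s}/\theta_{i_s}-\theta_{i_{s+1}}/\theta_{i_{s+1}}=0$. Worlds not on the cycle are untouched, so the weighted-balance constraints hold.
\item \textbf{Bounds.} Every perturbed entry is a non-bound entry, so it satisfies $\alpha_i<\tilde r_i^{(c)}<\beta_i$. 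Since there are finitely many such entries, we can choose
$$
\varepsilon=\min_{(i,c)}\frac{\min\{\tilde r_i^{(c)}-\alpha_i,\ \beta_i-\tilde r_i^{(c)}\}}{\theta_i}>0,
$$
with the minimum over cycle edges. Then both $\tilde r\pm\varepsilon\Delta$ respect the bounds $\alpha_i\le r_i^{(c)}\le\beta_i$.
\end{itemize}

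Finally, $\Delta\neq0$ because $\theta_{i_1}>0$. Hence $\tilde r=\tfrac12\bigl((\tilde r+\varepsilon\Delta)+(\tilde r-\varepsilon\Delta)\bigr)$ expresses $\tilde r$ as the midpoint of two distinct points of $\mathcal{Q}$. This contradicts extremality, so the graph of non-bound entries is acyclic, that is, a forest.

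The main point requiring care is the bookkeeping that makes each agent and each world appear exactly twice in the cycle. Taking a simple cycle guarantees this. Bipartiteness guarantees the alternation, so the signs cancel in every row and every column. None of these steps is a genuine obstacle.
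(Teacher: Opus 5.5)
Your proof is correct and follows the paper's argument. A cycle of non-bound entries admits the $\theta_i$-scaled alternating perturbation, which preserves every row sum and every weighted world balance. Because the cycle entries lie strictly between their bounds, both directions of the perturbation are feasible, contradicting extremality. Your explicit cycle indexing and concrete choice of $\varepsilon$ simply make the paper's sketch fully precise.
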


\begin{proof}
Suppose, for contradiction, that the graph contains a cycle. Assign alternating signs to the edges of this cycle. For an edge $(i,c)$ with a positive sign, increase $\tilde r_i^{(c)}$ by $\theta_i\varepsilon$; for an edge with a negative sign, decrease it by $\theta_i\varepsilon$.

Every agent on the cycle is incident to one positive and one negative cycle edge. Hence, the two perturbations associated with that agent cancel, and its total quota across all worlds remains unchanged. Similarly, every world on the cycle is incident to one positive and one negative cycle edge. Since every agent appearing on the cycle satisfies $\theta_i>0$, each normalized perturbation has magnitude
$$
\frac{\theta_i\varepsilon}{\theta_i}=\varepsilon.
$$
The two contributions to the per-world weighted-balance constraint therefore cancel.

Because every entry on the cycle lies strictly between its lower and upper bounds, both directions of the perturbation remain feasible for sufficiently small $\varepsilon>0$. The extreme point is therefore the midpoint of two distinct feasible points in $\mathcal{Q}$, contradicting its extremality. Hence, the graph contains no cycle and is a forest.
\end{proof}

The forest structure shows that the non-bound entries of the extreme point are sparse, while every remaining entry is equal to either $\alpha_i$ or $\beta_i$. All bound entries are integral tight quotas and therefore require no rounding and introduce no gap between $g_i$ and $f_i$.

In the next section, we round $(\tilde r_i^{(c)})_{i,c\in[n]}$ to an integral preferred-quota table while preserving every prefix-capacity condition. The per-agent bounds
$$
\sum_{c=1}^{n}g_i(\tilde r_i^{(c)})\leq m
$$
and the forest structure will together imply that the original aggregate demand of the resulting integral table is at most $nm+n-1$.

\section{Rounding While Preserving Prefix Capacity}
\label{sec:prefix-rounding}

In the final step, we round the extreme-point fractional table $(\tilde r_i^{(c)})_{i,c\in[n]}$ constructed in the previous section to an integral table while preserving every prefix-capacity condition. We then use the forest structure of its non-bound entries to show that passing from the relaxed demand functions $g_i$ back to the original demand functions $f_i$ incurs a total additive loss of at most $n-1$.

For every entry $\tilde r_i^{(c)}$, we round it to either its floor or its ceiling; that is, we require $\hat r_i^{(c)}\in\{\lfloor\tilde r_i^{(c)}\rfloor,\lceil\tilde r_i^{(c)}\rceil\}$. Rounding the entries independently may violate a prefix-capacity condition. We therefore coordinate the rounding so that every rounded prefix sum is at most the ceiling of its fractional value. The existence of such a rounding follows from a circulation argument.

\begin{lemma}
\label{lem:prefix_preserving_rounding}
The table $(\tilde r_i^{(c)})_{i,c\in[n]}$ admits an integral rounding $(\hat r_i^{(c)})_{i,c\in[n]}$ satisfying the following properties:
\begin{enumerate}
    \item $\hat r_i^{(c)}\in\{\lfloor\tilde r_i^{(c)}\rfloor,\lceil\tilde r_i^{(c)}\rceil\}$ for every $i,c\in[n]$;
    \item $\sum_{c=1}^{n}\hat r_i^{(c)}=\theta_i$ for every $i\in[n]$;
    \item $\sum_{i=1}^{k}\hat r_i^{(c)}\leq\left\lceil\sum_{i=1}^{k}\tilde r_i^{(c)}\right\rceil\leq\theta_k$ for every $c,k\in[n]$.
\end{enumerate}
Moreover, such an integral table can be computed in polynomial time using a max-flow algorithm.
\end{lemma}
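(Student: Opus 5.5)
We will obtain the rounding from an integral flow in a network that encodes, for each world, the prefix sums as a nested chain. This is the classical prefix/laminar rounding in the style of Baranyai. Work with $x_i^{(c)}:=\tilde r_i^{(c)}$. Recall two facts: row sums are the integers $\theta_i$, and the prefix capacities $\sum_{i\le k}\tilde r_i^{(c)}\le\theta_k$ hold by \Cref{lem:weighted_balance_prefix}. Since $\theta_k$ is an integer, the second inequality in property~3 is then immediate, and the whole task is the first inequality together with properties 1 and 2.

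The network is built as follows.
\begin{itemize}
\item A source $s$ has an arc to each agent node $u_i$ with lower and upper bound $\theta_i$.
\item Each world $c$ gets a path of nodes $w_{c,1},\ldots,w_{c,n}$, where node $w_{c,k}$ represents the prefix $\{1,\ldots,k\}$ in world $c$.
\item For each pair $(i,c)$ there is an arc $u_i\to w_{c,i}$ with bounds $\lfloor x_i^{(c)}\rfloor$ and $\lceil x_i^{(c)}\rceil$.
\item For $k<n$ there is an arc $w_{c,k}\to w_{c,k+1}$ with bounds $\lfloor\sum_{i\le k}x_i^{(c)}\rfloor$ and $\lceil\sum_{i\le k}x_i^{(c)}\rceil$.
\item There is an arc $w_{c,n}\to t$ with bounds $\lfloor\sum_{i}x_i^{(c)}\rfloor$ and $\lceil\sum_i x_i^{(c)}\rceil$.
\end{itemize}

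By flow conservation, the flow on $w_{c,k}\to w_{c,k+1}$ equals the prefix sum $\sum_{i\le k}$ of the flows on the arcs $u_i\to w_{c,i}$. The fractional assignment, with flow $x_i^{(c)}$ on each entry arc and the exact prefix sums on the chain arcs, is therefore a feasible flow with lower and upper bounds. The flow value is $\sum_i\theta_i$.

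Every bound in the network is an integer. By the integrality theorem for flows with lower bounds (circulations, via Hoffman), an integral feasible flow therefore exists. Setting $\hat r_i^{(c)}$ to be the flow on the arc $u_i\to w_{c,i}$ gives the three properties.
\begin{itemize}
\item Property~1 follows from the bounds on the entry arcs.
\item Property~2 follows from the source arcs.
\item Property~3 follows because the chain arc $w_{c,k}\to w_{c,k+1}$, or the sink arc when $k=n$, carries exactly $\sum_{i\le k}\hat r_i^{(c)}$, which is at most $\lceil\sum_{i\le k}x_i^{(c)}\rceil$.
\end{itemize}

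For polynomial time, we use the standard reduction of a flow with lower bounds to a max-flow instance. Add a super source and super sink with demand and supply adjustments, plus a $t\to s$ arc of infinite capacity. Any max-flow algorithm that returns an integral flow, such as Edmonds--Karp, then produces the table. The network has $O(n^2)$ nodes and arcs, and all numbers are rational of polynomial size, because the extreme point of $\mathcal{Q}$ can be computed by linear programming.

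The main obstacle I expect is the modelling step: making each prefix sum appear as the flow on a single arc. Only then can its floor and ceiling be imposed while every entry arc still feeds the chain correctly. The chain (prefix) construction handles this. The remaining check is that the fractional point respects every bound, which holds because each quantity lies between its own floor and ceiling.
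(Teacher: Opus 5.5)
Your proof is correct and follows essentially the same route as the paper: you use the same circulation network, namely source arcs fixed at $\theta_i$, entry arcs bounded by $\lfloor\tilde r_i^{(c)}\rfloor$ and $\lceil\tilde r_i^{(c)}\rceil$, and a per-world chain whose arcs carry the prefix sums, and then apply the integrality theorem for circulations. The differences are cosmetic: you also impose lower bounds $\lfloor\sum_{i\le k}\tilde r_i^{(c)}\rfloor$ on the chain arcs where the paper uses $0$, which the fractional flow still satisfies and which gives a slightly stronger two-sided prefix guarantee; and you close the circulation with an uncapacitated $t\to s$ arc rather than one fixed at $\sum_i\theta_i$, which is equivalent because the source arcs already force that value.
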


\begin{proof}
We construct a flow network with lower and upper bounds. The network contains a source vertex $s$, a vertex $u_i$ for every agent $i\in[n]$, vertices $v_{i,c}$ for every $i,c\in[n]$, and a sink vertex $t$. For every agent $i\in[n]$, we add an edge from $s$ to $u_i$ whose lower and upper bounds are both $\theta_i$. For every $i,c\in[n]$, we add an edge from $u_i$ to $v_{i,c}$ with lower bound $\lfloor\tilde r_i^{(c)}\rfloor$ and upper bound $\lceil\tilde r_i^{(c)}\rceil$.

For every world $c\in[n]$ and every $k\in[n-1]$, we add a forward chain edge from $v_{k,c}$ to $v_{k+1,c}$ with lower bound $0$ and upper bound $\left\lceil\sum_{i=1}^{k}\tilde r_i^{(c)}\right\rceil$. We complete this chain with an edge from $v_{n,c}$ to $t$ having lower bound $0$ and upper bound $\left\lceil\sum_{i=1}^{n}\tilde r_i^{(c)}\right\rceil$. Finally, we add an edge from $t$ to $s$ whose lower and upper bounds are both $\sum_{i=1}^{n}\theta_i$, thereby turning the network into a circulation network.

The fractional table $(\tilde r_i^{(c)})_{i,c\in[n]}$ induces a feasible fractional circulation. Specifically, we send $\tilde r_i^{(c)}$ units of flow from $u_i$ to $v_{i,c}$ and $\sum_{i'=1}^{k}\tilde r_{i'}^{(c)}$ units through the forward edge leaving $v_{k,c}$, where for $k=n$ this is the edge from $v_{n,c}$ to $t$. The row-sum constraints defining $\mathcal Q$ guarantee flow conservation at every $u_i$, and the definition of the chain flows guarantees conservation at every $v_{k,c}$. All edge bounds are satisfied by construction.

Since all lower and upper bounds are integral, the integrality theorem for feasible circulations guarantees an integral feasible circulation. Let $\hat r_i^{(c)}$ be the integral flow on the edge from $u_i$ to $v_{i,c}$. Its lower and upper bounds imply that
$$
\hat r_i^{(c)}
\in
\left\{
\lfloor\tilde r_i^{(c)}\rfloor,
\lceil\tilde r_i^{(c)}\rceil
\right\}.
$$
Flow conservation at $u_i$ gives $\sum_{c=1}^{n}\hat r_i^{(c)}=\theta_i$. Moreover, flow conservation along the chain associated with world $c$ implies that the forward edge leaving $v_{k,c}$ carries exactly $\sum_{i=1}^{k}\hat r_i^{(c)}$ units of flow. Its upper bound therefore gives
$$
\sum_{i=1}^{k}\hat r_i^{(c)}
\leq
\left\lceil\sum_{i=1}^{k}\tilde r_i^{(c)}\right\rceil.
$$
Since $(\tilde r_i^{(c)})_{i,c\in[n]}\in\mathcal P$, we have $\sum_{i=1}^{k}\tilde r_i^{(c)}\leq\theta_k$. As $\theta_k$ is integral,
$$
\left\lceil\sum_{i=1}^{k}\tilde r_i^{(c)}\right\rceil\leq\theta_k.
$$
Thus, the integral table satisfies all the desired properties.
\end{proof}

It remains to bound the increase incurred when we pass from the relaxed demand functions $g_i$ back to the original demand functions $f_i$. Recall that an entry $\tilde r_i^{(c)}$ is a \emph{bound entry} if $\tilde r_i^{(c)}\in\{\alpha_i,\beta_i\}$ and a \emph{non-bound entry} otherwise. Since $\alpha_i$ and $\beta_i$ are integral, every bound entry is unchanged by rounding. Both endpoints are tight quotas, so, for every bound entry,
$$
f_i(\hat r_i^{(c)})=f_i(\tilde r_i^{(c)})=g_i(\tilde r_i^{(c)})~.
$$
Hence, only non-bound entries can contribute to the increase. The following lemma bounds their total contribution for each agent.

\begin{lemma}
\label{lem:agent_rounding_loss}
For every agent $i\in[n]$, let $k_i:=\left|\left\{c\in[n]:\alpha_i<\tilde r_i^{(c)}<\beta_i\right\}\right|$ denote the number of non-bound entries in row $i$. Then
$$
\sum_{c=1}^{n}f_i(\hat r_i^{(c)})
\leq
\left\lceil\sum_{c=1}^{n}g_i(\tilde r_i^{(c)})\right\rceil
+
\max\{k_i-1,0\}~.
$$
\end{lemma}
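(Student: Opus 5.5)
The plan is to compare $\sum_{c}f_i(\hat r_i^{(c)})$ with $\sum_{c}g_i(\hat r_i^{(c)})$ rather than with $g_i$ at the fractional entries, and to use the fact that $g_i$ is affine on $[\alpha_i,\beta_i]$ to identify the second sum with $S_i:=\sum_{c=1}^{n}g_i(\tilde r_i^{(c)})$. Integrality of the left-hand side then converts a strict inequality into the claimed bound.

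\emph{Step 1: the rounded row stays on the segment.} Since $\alpha_i$ and $\beta_i$ are integers and $\alpha_i\leq\tilde r_i^{(c)}\leq\beta_i$, both $\lfloor\tilde r_i^{(c)}\rfloor$ and $\lceil\tilde r_i^{(c)}\rceil$ lie in $[\alpha_i,\beta_i]$. Property~1 of \Cref{lem:prefix_preserving_rounding} therefore gives $\hat r_i^{(c)}\in[\alpha_i,\beta_i]$ for every $c$. By property~2, $\sum_{c}\hat r_i^{(c)}=\theta_i=\sum_c\tilde r_i^{(c)}$. Because $g_i$ is affine on $[\alpha_i,\beta_i]$, exactly as in \Cref{lem:preserved_relaxed_demand}, we get
\[
\sum_{c=1}^{n}g_i(\hat r_i^{(c)})=n\,g_i\!\left(\frac{\theta_i}{n}\right)=S_i.
\]
The degenerate cases $\alpha_i=\beta_i$ and $\theta_i=0$ are trivial: every entry is then fixed and $k_i=0$.

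\emph{Step 2: per-entry gaps.} Every bound entry is integral and unchanged by rounding. Since $\alpha_i$ and $\beta_i$ are tight quotas, each bound entry satisfies $f_i(\hat r_i^{(c)})=g_i(\hat r_i^{(c)})$. For each of the $k_i$ non-bound entries, $\hat r_i^{(c)}$ is an integral quota, so \Cref{lem:lower_convex_envelope}(i) gives $0\leq f_i(\hat r_i^{(c)})-g_i(\hat r_i^{(c)})<1$. This holds even if the entry happens to round onto $\alpha_i$ or $\beta_i$, where the gap is simply zero. Summing with Step~1, we obtain
\[
\sum_c f_i(\hat r_i^{(c)})=S_i \quad\text{if } k_i=0,
\qquad
\sum_c f_i(\hat r_i^{(c)})<S_i+k_i \quad\text{if } k_i\geq 1.
\]

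\emph{Step 3: integrality.} The left-hand side $F_i:=\sum_c f_i(\hat r_i^{(c)})$ is an integer.

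If $k_i=0$, then $F_i=S_i\leq\lceil S_i\rceil$, as required.

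If $k_i\geq1$, the strict inequality $F_i<S_i+k_i$ with $F_i$ integral gives $F_i\leq\lceil S_i+k_i\rceil-1=\lceil S_i\rceil+k_i-1$, which is the claim.

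The only step needing care is Step~1. The point is to transfer the loss accounting to the rounded values themselves, which is legitimate because the segment endpoints are integral and the row sums are preserved. No step looks like a genuine obstacle. The "$-1$" savings comes entirely from integrality of $F_i$ combined with the strictness of the gap bound in \Cref{lem:lower_convex_envelope}(i).
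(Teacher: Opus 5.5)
Your proposal is correct and follows the paper's proof essentially step for step. You keep the rounded row inside $[\alpha_i,\beta_i]$ with the same row sum, so affinity of $g_i$ gives $\sum_c g_i(\hat r_i^{(c)})=\sum_c g_i(\tilde r_i^{(c)})$. You then bound the gap by zero at bound entries and by strictly less than one at non-bound entries via \Cref{lem:lower_convex_envelope}, and use integrality of $\sum_c f_i(\hat r_i^{(c)})$ to turn the strict inequality into the $k_i-1$ bound.
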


\begin{proof}
Fix an agent $i\in[n]$. Since $\tilde r_i^{(c)}\in[\alpha_i,\beta_i]$ and both endpoints are integral, every rounded quota $\hat r_i^{(c)}$ is an integer in $[\alpha_i,\beta_i]$. By \Cref{lem:lower_convex_envelope}, for every such quota $q$,
$$
f_i(q)<g_i(q)+1~.
$$
Because $g_i$ is affine on $[\alpha_i,\beta_i]$ and the rounding preserves the total quota of agent $i$, we also have
$$
\sum_{c=1}^{n}g_i(\hat r_i^{(c)})
=
\sum_{c=1}^{n}g_i(\tilde r_i^{(c)})~.
$$
If $k_i=0$, every entry is bound, and hence $f_i(\hat r_i^{(c)})=g_i(\tilde r_i^{(c)})$ for every $c\in[n]$. Suppose that $k_i\geq1$. Bound entries contribute no increase, whereas the increase at each of the $k_i$ non-bound entries is strictly smaller than $1$. Therefore,
$$
\sum_{c=1}^{n}f_i(\hat r_i^{(c)})
<
\sum_{c=1}^{n}g_i(\hat r_i^{(c)})+k_i
=
\sum_{c=1}^{n}g_i(\tilde r_i^{(c)})+k_i~.
$$
Since the left-hand side is integral, the strict inequality implies
$$
\sum_{c=1}^{n}f_i(\hat r_i^{(c)})
\leq
\left\lceil\sum_{c=1}^{n}g_i(\tilde r_i^{(c)})\right\rceil+k_i-1~.
$$
Combining the two cases proves the lemma.
\end{proof}

The saving of one unit in the preceding lemma is essential. For every agent with at least one non-bound entry, the strict entrywise gap and integrality improve the naive loss bound from $k_i$ to $k_i-1$. We now combine these per-agent savings with the forest structure to bound the aggregate loss.

\begin{lemma}
\label{lem:aggregate_demand_bound}
The rounded table satisfies
$$
\sum_{c=1}^{n}\sum_{i=1}^{n}f_i(\hat r_i^{(c)})\leq nm+n-1~.
$$
\end{lemma}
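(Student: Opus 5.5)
The plan is to sum the per-agent bound of \Cref{lem:agent_rounding_loss} over all agents. Then we control the two resulting terms separately: the ceilings of the relaxed demands, and the total number of non-bound entries beyond one per agent.

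\emph{First term.} By \Cref{lem:preserved_relaxed_demand}, every agent satisfies $\sum_{c=1}^{n}g_i(\tilde r_i^{(c)})\leq m$. Since $m$ is an integer, this gives
$$
\left\lceil\sum_{c=1}^{n}g_i(\tilde r_i^{(c)})\right\rceil\leq m .
$$
Summing \Cref{lem:agent_rounding_loss} over $i\in[n]$ therefore yields
$$
\sum_{c=1}^{n}\sum_{i=1}^{n}f_i(\hat r_i^{(c)})\leq nm+\sum_{i=1}^{n}\max\{k_i-1,0\}.
$$

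\emph{Second term.} It remains to show $\sum_{i}\max\{k_i-1,0\}\leq n-1$, and here the forest structure of \Cref{lem:extreme_point_forest} enters. Let $I:=\{i\in[n]:k_i\geq1\}$ be the set of agents that are non-isolated in the bipartite graph of non-bound entries. Then
$$
\sum_i\max\{k_i-1,0\}=\sum_{i\in I}(k_i-1)=E-|I|,
$$
where $E=\sum_i k_i$ is the number of edges. Let $W$ be the set of worlds incident to at least one edge. The subgraph on $I\cup W$ is a forest with no isolated vertices, so $E\leq |I|+|W|-1$ whenever $E\geq 1$. Hence $E-|I|\leq |W|-1\leq n-1$. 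If $E=0$, the sum is $0\leq n-1$ trivially.

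Combining the two bounds gives the claimed total of $nm+n-1$. The only delicate points are two small bookkeeping checks. First, the ceiling in \Cref{lem:agent_rounding_loss} must be absorbed by the integrality of $m$. Second, the forest edge bound must be applied to non-isolated vertices only, so that we obtain $|W|-1$ rather than something like $2n-1$; note that the per-agent savings of one unit are exactly what cancels the $|I|$ term. I expect no genuine obstacle beyond stating the forest edge count correctly.
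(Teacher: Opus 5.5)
Your proposal is correct and matches the paper's proof essentially step for step. The paper also sums \Cref{lem:agent_rounding_loss} over agents, bounds each ceiling by $m$ using integrality of $m$, and applies the forest bound after deleting isolated vertices to get $E-a\leq b-1\leq n-1$; the $m$ bound holds because the extreme point lies in $\mathcal{P}$ by \Cref{lem:weighted_balance_prefix}.
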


\begin{proof}
Let $E=\sum_{i=1}^{n}k_i$ be the total number of non-bound entries, and let $a$ be the number of agents associated with at least one non-bound entry. Delete all isolated vertices from the bipartite graph of non-bound entries, and let $b$ be the number of world vertices that remain. By \Cref{lem:extreme_point_forest}, the resulting graph is a forest with $E$ edges and $a+b$ vertices, where $b\leq n$. If $E>0$, the forest property implies
$$
E\leq a+b-1\leq a+n-1~.
$$
Consequently,
$$
\sum_{i=1}^{n}\max\{k_i-1,0\}=E-a\leq n-1~.
$$
The inequality also holds trivially when $E=0$. Combining it with \Cref{lem:agent_rounding_loss} gives
$$
\sum_{i=1}^{n}\sum_{c=1}^{n}f_i(\hat r_i^{(c)})
\leq
\sum_{i=1}^{n}\left\lceil\sum_{c=1}^{n}g_i(\tilde r_i^{(c)})\right\rceil+n-1~.
$$
By the construction of $(\tilde r_i^{(c)})_{i,c\in[n]}$, for every agent $i\in[n]$,
$$
\sum_{c=1}^{n}g_i(\tilde r_i^{(c)})
=
n g_i\left(\frac{\theta_i}{n}\right)
\leq m~.
$$
Thus, $\left\lceil\sum_c g_i(\tilde r_i^{(c)})\right\rceil\leq m$ for every agent $i$. Substituting these bounds into the preceding inequality proves the lemma.
\end{proof}

We now restate and prove the goods theorem.

\exactMMSGoodsTheorem*

\begin{proof}
By \Cref{lem:zero_lower_value_reduction}, it suffices to consider instances in which both agent-specific values are positive. By the standard IDO reduction cited in the preliminaries, we may further restrict attention to IDO instances. Finally, positive scaling of an agent's valuation scales her MMS by the same factor, so we may normalize her lower value to $1$. It therefore suffices to prove the theorem for the normalized positive IDO instance considered above.

Let $(\tilde r_i^{(c)})_{i,c\in[n]}$ be the extreme-point fractional table constructed in the previous section, and let $(\hat r_i^{(c)})_{i,c\in[n]}$ be the integral rounding given by \Cref{lem:prefix_preserving_rounding}. For every world $c\in[n]$ and every prefix $k\in[n]$, the rounded table satisfies
$$
\sum_{i=1}^{k}\hat r_i^{(c)}\leq\theta_k.
$$
Moreover, by \Cref{lem:aggregate_demand_bound},
$$
\sum_{c=1}^{n}\sum_{i=1}^{n}f_i(\hat r_i^{(c)})\leq nm+n-1~.
$$
Since every $f_i(\hat r_i^{(c)})$ is integral, if every world had total demand at least $m+1$, then
$$
\sum_{c=1}^{n}\sum_{i=1}^{n}f_i(\hat r_i^{(c)})
\geq
n(m+1)
=
nm+n,
$$
contradicting the preceding upper bound. Hence, there exists a world $c^\star\in[n]$ such that
$$
\sum_{i=1}^{n}f_i(\hat r_i^{(c^\star)})\leq m.
$$
Therefore, the preferred-quota vector $(\hat r_1^{(c^\star)},\ldots,\hat r_n^{(c^\star)})$ satisfies both conditions of \Cref{lem:implementable_condition}: for every $k\in[n]$, $\sum_{i=1}^{k}\hat r_i^{(c^\star)}\leq\theta_k$, and its total demand satisfies $\sum_{i=1}^{n}f_i(\hat r_i^{(c^\star)})\leq m$. By \Cref{lem:implementable_condition}, this preferred-quota vector is implementable. Thus, there exists an allocation $A$ such that every agent $i\in[n]$ receives at least $\hat r_i^{(c^\star)}$ preferred goods and at least $f_i(\hat r_i^{(c^\star)})$ goods in total. Since $f_i(r)\geq r+\max\{\MMS_i-r\bv_i,0\}$ for every integral quota $r$, we have
$$
v_i(A(i))
\geq
\hat r_i^{(c^\star)}\bv_i+
f_i(\hat r_i^{(c^\star)})-\hat r_i^{(c^\star)}
\geq
\MMS_i.
$$
Hence, $A$ is an exact MMS allocation for the normalized positive IDO instance. Undoing the normalization and the standard IDO reduction, and then applying \Cref{lem:zero_lower_value_reduction}, yields an exact MMS allocation for the original instance.

It remains to verify the running time. In a bivalued instance, each agent's MMS can be computed in polynomial time by a standard count-based dynamic program over the numbers of preferred and nonpreferred goods. The functions $f_i$ and $g_i$ and the breakpoints of $g_i$ can then be constructed explicitly. A vertex of the rational polytope $\mathcal{Q}$ can be obtained by lexicographically optimizing its $n^2$ coordinates through a polynomial number of linear programs. The rounding circulation and the matching in \Cref{lem:implementable_condition} are computable by polynomial-time flow and matching algorithms, respectively. Finally, the standard IDO reduction and the preprocessing in \Cref{lem:zero_lower_value_reduction} run in polynomial time and preserve polynomial encoding length. Thus, the entire procedure runs in polynomial time.
\end{proof}

\section*{AI Disclosure}
The author used OpenAI's ChatGPT and Codex in both the research and writing of this paper. Long-running autonomous attempts with Codex (GPT-5.6 Sol, extra-high reasoning) did not independently produce the result. The research instead involved more than a week of extensive, many-round discussions between the author and ChatGPT (GPT-5.6 Pro), during which numerous ideas were proposed, tested, and refined. The model made substantive contributions to this process; most notably, it suggested the weighted transportation polytope that became a key component of the proof. AI tools also assisted substantially with writing and checking technical details, especially the symmetric proof for chores. The final proof follows the author's overall framework and ideas. The author independently verified all arguments and takes full responsibility for the content and correctness of the paper.

I would also like to share an interesting story. After learning of the result, a colleague independently launched a separate long-running autonomous Codex session and asked it to solve the problem. The session also used GPT-5.6 Sol, possibly with the reasoning effort set to ultra. The attempt did not succeed. Even after the session was given the proof for goods, it did not successfully reconstruct the proof for chores.

\bibliographystyle{plainnat}
\bibliography{references}

\appendix
\section{Zero Lower Values}
\label{sec:zero-lower-values}

The main proofs normalize each agent's lower value to $1$ and are therefore written for strictly positive item values. We now justify this assumption.

\begin{lemma}
\label{lem:zero_lower_value_reduction}
For both goods and chores, zero lower values can be eliminated in polynomial time, and any exact MMS allocation for the transformed positive instance can be converted into one for the original instance.
\end{lemma}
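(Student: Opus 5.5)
We need zero lower values handled for goods and chores separately. The goal is an instance with all realized values positive such that any exact MMS allocation of the new instance converts to one of the original. We work agent by agent and only touch agents with $a_i=0$.

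\textbf{Goods.} Let agent $i$ have $a_i=0<b_i$ and let $P_i$ be her set of goods of value $b_i$, with $t_i=|P_i|$. Her maximin share is $\MMS_i=b_i\lfloor t_i/n\rfloor$. Indeed, zero-value goods are irrelevant, and the best partition splits $P_i$ as evenly as possible.

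\begin{itemize}
\item If $t_i<n$, then $\MMS_i=0$ and the agent is satisfied by any bundle. We give her the constant valuation $v_{ij}=1$ for all $j$, which is positive. Her new MMS is positive, so an exact MMS allocation of the new instance gives her at least that. This is more than required in the original instance, where $\MMS_i=0$ is trivially met.
\item If $t_i\geq n$, we replace the value $0$ by a small $\varepsilon_i>0$, chosen so that the new MMS equals $b_i\lfloor t_i/n\rfloor+\delta$ with $\delta<b_i$. For example, take $\varepsilon_i<b_i/m$. The new MMS is at least the old one, since the old partition still works and values only increased.
\end{itemize}

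Now take any bundle $S$ with new value at least the new MMS. Then $|S\cap P_i|b_i+\varepsilon_i m\geq b_i\lfloor t_i/n\rfloor$. Since $\varepsilon_i m<b_i$ and $|S\cap P_i|$ is an integer, this gives $|S\cap P_i|\geq\lfloor t_i/n\rfloor$. Hence the original value is at least $\MMS_i$. Choosing $\varepsilon_i$ rational with polynomial encoding, e.g.\ $\varepsilon_i=b_i/(m+1)$, keeps the transformation polynomial. The allocation itself is reused unchanged.

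\textbf{Chores.} Let agent $i$ have $a_i=0$ and let $D_i$ be her set of chores of cost $b_i$. Her original $\MMS_i=b_i\lceil |D_i|/n\rceil$. We replace cost $0$ by $\varepsilon_i=b_i/(m+1)$. The new MMS is at most $b_i\lceil|D_i|/n\rceil+\varepsilon_i m$, attained by the balanced partition of $D_i$ with the zero chores placed anywhere.

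Take any bundle $S$ with new cost at most the new MMS. Since zero chores contribute nonnegatively, $|S\cap D_i|b_i\leq b_i\lceil|D_i|/n\rceil+\varepsilon_i m$. Because $\varepsilon_i m<b_i$, integrality gives $|S\cap D_i|\leq\lceil |D_i|/n\rceil$. So the original cost is at most $\MMS_i$.

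The case $D_i=\emptyset$ is trivial: the original MMS is $0$ and every bundle costs $0$. We still set a positive constant valuation, as in the goods case. The resulting pair of values may coincide, but the model allows choosing an arbitrary unrealized second value.

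\textbf{Main obstacle.} The key step is the integrality gap argument: the perturbation $\varepsilon_i m$ must stay strictly below one unit $b_i$. Note that we do not need the new MMS to be exactly computed, only sandwiched appropriately. Everything else, including the polynomial-time bound, is routine, since the perturbations preserve personalized bivaluedness agent by agent.
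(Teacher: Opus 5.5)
Your proof is correct. For goods it is essentially the paper's argument. The paper also replaces each zero by $\epsilon_i=b_i/(m+1)$ and uses the same integrality step: a bundle with at most $q_i-1$ high goods has perturbed value at most $(q_i-1)b_i+m\epsilon_i<q_ib_i$. Your separate treatment of $t_i<n$ via a constant valuation is unnecessary, since the same perturbation works there and the original constraint is vacuous. Your remark that the new MMS is then positive is false when $m<n$, but nothing depends on it.

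For chores you take a genuinely different route. The paper does not perturb. It collects the set $Z$ of chores that cost zero to some agent and reserves each such chore for an agent who finds it free. It then deletes $Z$ and uses only the monotonicity $\MMS_i(I')\le\MMS_i(I)$ under item removal. An exact allocation of the smaller instance therefore stays exact after the free chores are handed out. You instead mirror the goods argument. You raise zero costs to $\varepsilon_i$ and use two facts: the original costs and the original MMS are multiples of $b_i$, while the perturbation raises the MMS by at most $\varepsilon_i m<b_i$ and never lowers any bundle's cost.

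What each approach buys: yours treats both settings uniformly, keeps the item set unchanged, and reuses the allocation verbatim. The paper's version needs no integrality argument or closed-form MMS at all, so it would work for arbitrary additive costs with zeros. Its price is modifying the item set and reassigning the reserved chores afterwards.
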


\begin{proof}
For chores, let $Z$ contain every chore that has zero cost for some agent. Reserve each $j\in Z$ for an agent $i(j)$ with $v_{i(j),j}=0$, and remove $Z$ to obtain an instance $I'$. Removing $Z$ from an MMS partition of agent $i$ shows that
$$
\MMS_i(I')\leq\MMS_i(I)~.
$$
Hence, an exact MMS allocation of $I'$ remains exact for $I$ after each reserved chore $j$ is assigned to $i(j)$. All costs remaining in $I'$ are positive. If an agent realizes fewer than two costs, her unrealized values can be replaced by arbitrary distinct positive rationals without changing the instance.

For goods, the case $m=0$ is immediate. For every agent $i$ with lower value $a_i=0$, replace each zero value by $\epsilon_i:=b_i/(m+1)$. Let $h_i$ be the number of goods worth $b_i$ to agent $i$ and let $q_i:=\lfloor h_i/n\rfloor$. Her original MMS is $q_i b_i$, while her MMS after perturbation is at least $q_i b_i$. If $q_i\geq1$, any bundle containing at most $q_i-1$ high-valued goods has perturbed value at most
$$
(q_i-1)b_i+m\epsilon_i<q_i b_i~.
$$
Thus, every bundle meeting the perturbed MMS contains at least $q_i$ high-valued goods and has original value at least $q_i b_i$. When $q_i=0$, the original MMS constraint is automatic. Therefore, any exact MMS allocation of the perturbed instance is also exact for the original instance. The perturbation has polynomial encoding length, so both reductions run in polynomial time.
\end{proof}

\section{Exact MMS Allocations for Chores}
\label{sec:chores}

We now prove the result for chores. The proof uses the same fractional construction and rounding machinery as for goods, but the objective is reversed: we seek a world whose total \emph{capacity} is at least the number of chores.

By \Cref{lem:zero_lower_value_reduction}, the standard IDO reduction cited in the preliminaries, and positive scaling, it suffices to consider a normalized positive IDO instance. Under the standard IDO convention, rank positions are ordered from most to least costly. We reverse these common rank labels, so that the chores are indexed from easiest to most difficult and, for every agent $i\in[n]$,
$$
v_{ij}=\begin{cases}
1, & j\leq\eta_i,\\
\bv_i, & j>\eta_i,
\end{cases}
\qquad \bv_i>1.
$$
We index the agents so that $\eta_1\leq\cdots\leq\eta_n$.

\paragraph{Integral and relaxed capacities.}
For agent $i$, an easy slot can receive one chore in $[\eta_i]$, whereas an unrestricted slot can receive any one chore. Set
$$
\rho_i:=\min\{\eta_i,\lfloor\MMS_i\rfloor\}
$$
and, for every integral $r\in\{0,\ldots,\rho_i\}$, define
$$
F_i(r):=r+\left\lfloor\frac{\MMS_i-r}{\bv_i}\right\rfloor.
$$
Thus, any assignment that fills the $r$ easy slots with easy chores and the remaining $F_i(r)-r$ unrestricted slots with arbitrary chores has cost at most $r+\bv_i(F_i(r)-r)\leq\MMS_i$. Notice that $F_i(r)-r\geq0$ because $r\leq\rho_i\leq\lfloor\MMS_i\rfloor$.

The domain contains all quotas used below. Indeed, in any MMS partition of agent $i$, the number of easy chores in each bundle is at most $\rho_i$. Since these numbers sum to $\eta_i$, we also have $\eta_i/n\in[0,\rho_i]$.

Let $G_i$ be the least concave majorant of the integral capacity points
$$
\{(r,F_i(r)):r=0,\ldots,\rho_i\}.
$$
We also call $G_i$ their upper concave envelope. An integral quota $r$ is \emph{tight} if $G_i(r)=F_i(r)$.

\begin{lemma}
\label{lem:upper_concave_envelope}
For every agent $i\in[n]$ and every integral $r\in\{0,\ldots,\rho_i\}$,
$$
F_i(r)\leq G_i(r)<F_i(r)+1.
$$
Moreover, the endpoints of every maximal linear segment of $G_i$ are integral tight quotas.
\end{lemma}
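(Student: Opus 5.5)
We mirror the proof of \Cref{lem:lower_convex_envelope}, with the roles of lower and upper swapped and the floor in place of the ceiling. Fix an agent $i$. The upper concave envelope $G_i$ is the upper boundary of the convex hull of the finitely many points $(r,F_i(r))$, $r\in\{0,\ldots,\rho_i\}$. Since it majorizes all these points, $F_i(r)\leq G_i(r)$ holds immediately at every integral $r$.

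For the strict upper bound, introduce the floor-free capacity
$$
u_i(r):=r+\frac{\MMS_i-r}{\bv_i}\qquad (r\in[0,\rho_i]).
$$
This function is affine, hence concave. At every integral quota $r$ in the domain we have $F_i(r)=\lfloor u_i(r)\rfloor$, because $r$ is an integer and so $r+\lfloor x\rfloor=\lfloor r+x\rfloor$. Therefore $u_i(r)\geq F_i(r)$ at every integral point, so $u_i$ is a concave majorant of the capacity points. Since $G_i$ is the least such majorant, $G_i\leq u_i$ on $[0,\rho_i]$. It follows that
$$
G_i(r)\leq u_i(r)<\lfloor u_i(r)\rfloor+1=F_i(r)+1 .
$$
This is the only step that needs a new idea beyond the goods case, namely identifying a concave function sandwiched correctly. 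Here it is even simpler than for goods, since no $\max$ term appears.

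For the second claim, the graph of $G_i$ is a polygonal curve whose vertices are extreme points of the convex hull of the capacity points. Such vertices are themselves capacity points, so each has an integral first coordinate $r$ with $G_i(r)=F_i(r)$. The endpoints of any maximal linear segment are exactly such vertices, which include the domain endpoints $0$ and $\rho_i$, where the envelope meets the extreme points. Hence they are integral tight quotas.

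A degenerate case needs one remark: when $\rho_i=0$ the domain is the single point $\{0\}$, and both claims are trivial. Nonnegativity of $F_i$ and the domain restriction $r\leq\rho_i$ play no role in the argument beyond ensuring that $F_i$ is well defined on the stated domain.
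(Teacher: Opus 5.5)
Your proof is correct and takes essentially the same route as the paper: the affine function $r+(\MMS_i-r)/\bv_i$ serves as a concave majorant whose floor equals $F_i$ at integral points, $G_i$ is sandwiched between $F_i$ and it, and the segment-endpoint claim comes from the geometry of the upper hull. Your extra remarks — the integer-shift identity for the floor and why the domain endpoints $0$ and $\rho_i$ are hull vertices — just make explicit what the paper leaves implicit.
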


\begin{proof}
The first inequality follows from the definition of the upper envelope. The affine function
$$
L_i(r):=r+\frac{\MMS_i-r}{\bv_i}
$$
lies above every integral capacity point and satisfies $F_i(r)=\lfloor L_i(r)\rfloor$ at integral $r$. Since $G_i$ is the smallest concave majorant of these points,
$$
F_i(r)\leq G_i(r)\leq L_i(r)<F_i(r)+1.
$$
The final statement follows from the geometry of the upper hull.
\end{proof}

Equivalently, $G_i(r)$ is the maximum of $\sum_{q=0}^{\rho_i}\lambda_qF_i(q)$ over all vectors $(\lambda_q)_{q=0}^{\rho_i}$ satisfying $\lambda_q\geq0$, $\sum_q\lambda_q=1$, and $\sum_q q\lambda_q=r$.

\begin{lemma}
\label{lem:uniform_relaxed_capacity}
For every agent $i\in[n]$,
$$
nG_i\left(\frac{\eta_i}{n}\right)\geq m.
$$
\end{lemma}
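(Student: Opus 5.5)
We mirror the proof of \Cref{lem:uniform_relaxed_demand}, using the mixing characterization of $G_i$ stated just before the lemma.

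Fix an agent $i$ and a chore MMS partition $(B_i^{(1)},\ldots,B_i^{(n)})$ with $v_i(B_i^{(c)})\leq\MMS_i$ for every $c\in[n]$. Let $\bar r_i^{(c)}:=|B_i^{(c)}\cap[\eta_i]|$ be the number of easy chores in bundle $c$. Since the bundles partition $[m]$, we have $\sum_{c}\bar r_i^{(c)}=\eta_i$.

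We first check that each $\bar r_i^{(c)}$ lies in the domain $\{0,\ldots,\rho_i\}$. Clearly $\bar r_i^{(c)}\leq\eta_i$. Moreover, the bundle contains $\bar r_i^{(c)}$ chores of cost $1$, so $\bar r_i^{(c)}\leq v_i(B_i^{(c)})\leq\MMS_i$. Integrality then gives $\bar r_i^{(c)}\leq\lfloor\MMS_i\rfloor$, and hence $\bar r_i^{(c)}\leq\rho_i$.

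The key step is to show that $|B_i^{(c)}|\leq F_i(\bar r_i^{(c)})$. Write $d:=|B_i^{(c)}|-\bar r_i^{(c)}$ for the number of difficult chores in the bundle, which is a nonnegative integer. The cost bound gives
$$
\bar r_i^{(c)}+d\bv_i\leq\MMS_i,
$$
so $d\leq(\MMS_i-\bar r_i^{(c)})/\bv_i$. Since $d$ is an integer, $d\leq\lfloor(\MMS_i-\bar r_i^{(c)})/\bv_i\rfloor$. Adding $\bar r_i^{(c)}$ to both sides yields $|B_i^{(c)}|\leq F_i(\bar r_i^{(c)})$. Summing over all bundles gives
$$
\sum_{c=1}^{n}F_i(\bar r_i^{(c)})\geq\sum_{c=1}^{n}|B_i^{(c)}|=m.
$$

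Finally, assign weight $1/n$ to each quota $\bar r_i^{(c)}$, accumulating weights when quotas repeat. This is a feasible mixture of integral quotas whose average is $\eta_i/n$. The maximization characterization of $G_i$ then gives
$$
G_i\left(\frac{\eta_i}{n}\right)\geq\frac{1}{n}\sum_{c=1}^{n}F_i(\bar r_i^{(c)})\geq\frac{m}{n},
$$
which is the claimed inequality $nG_i(\eta_i/n)\geq m$.

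The main point needing care is the domain check, namely that every bundle quota is at most $\rho_i$, so that $F_i$ is defined there and the mixture is admissible. This follows from the cost-$1$ easy chores together with $v_i(B_i^{(c)})\leq\MMS_i$, as shown above. The remaining steps are routine.
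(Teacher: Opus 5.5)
Your proof is correct and matches the paper's argument: you bound each MMS bundle's size by $F_i(\bar r_i^{(c)})$ using integrality of the number of difficult chores, then apply the maximization characterization of $G_i$ with uniform weights $1/n$. Your explicit check that $\bar r_i^{(c)}\leq\rho_i$ is the same observation the paper makes in the paragraph just before the lemma.
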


\begin{proof}
Fix an MMS partition $(B_i^{(1)},\ldots,B_i^{(n)})$ for agent $i$, and let $\bar r_i^{(c)}:=|B_i^{(c)}\cap[\eta_i]|$. If $d_i^{(c)}:=|B_i^{(c)}|-\bar r_i^{(c)}$, then
$$
\bar r_i^{(c)}+\bv_i d_i^{(c)}\leq\MMS_i,
$$
so $|B_i^{(c)}|\leq F_i(\bar r_i^{(c)})$. Moreover, $\sum_c\bar r_i^{(c)}=\eta_i$. The mixing interpretation therefore gives
$$
G_i\left(\frac{\eta_i}{n}\right)
\geq\frac1n\sum_{c=1}^nF_i(\bar r_i^{(c)})
\geq\frac mn.
$$
\end{proof}

\paragraph{Fractional construction and rounding.}
If $\rho_i=0$, set $\alpha_i=\beta_i=0$. Otherwise, if $\eta_i/n$ is a vertex of the graph of $G_i$, including an endpoint of its domain, set $\alpha_i=\beta_i=\eta_i/n$; if it is not, let $[\alpha_i,\beta_i]$ be the domain of the unique maximal linear segment whose relative interior contains $\eta_i/n$. In every case, the endpoints are integral tight quotas. Let $\mathcal P_{\mathrm{ch}}$ contain the tables satisfying
$$
\sum_c r_i^{(c)}=\eta_i,\qquad
\alpha_i\leq r_i^{(c)}\leq\beta_i,
\qquad
\sum_{i=1}^k r_i^{(c)}\leq\eta_k
$$
for every $i,c,k\in[n]$. Since $G_i$ is affine on $[\alpha_i,\beta_i]$, every such table satisfies
$$
\sum_{c=1}^nG_i(r_i^{(c)})
=nG_i\left(\frac{\eta_i}{n}\right)\geq m.
$$

Define
$$
\lambda:=\frac1n\sum_{i:\eta_i>0}
\frac{\eta_i-n\alpha_i}{\eta_i}
$$
and let $\mathcal Q_{\mathrm{ch}}$ be the set of tables satisfying the same row-sum and bound constraints together with
$$
\sum_{i:\eta_i>0}\frac{r_i^{(c)}-\alpha_i}{\eta_i}=\lambda
\qquad\text{for every world }c.
$$
The uniform table belongs to $\mathcal Q_{\mathrm{ch}}$.

\begin{lemma}[Chores counterpart of \Cref{lem:weighted_balance_prefix}]
\label{lem:chores_weighted_balance_prefix}
We have $\mathcal Q_{\mathrm{ch}}\subseteq\mathcal P_{\mathrm{ch}}$.
\end{lemma}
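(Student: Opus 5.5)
The plan is to mirror the proof of \Cref{lem:weighted_balance_prefix} almost verbatim, with $\eta_i$ in place of $\theta_i$. The row-sum constraints and the bound constraints $\alpha_i\leq r_i^{(c)}\leq\beta_i$ appear in both $\mathcal Q_{\mathrm{ch}}$ and $\mathcal P_{\mathrm{ch}}$, so only the prefix-capacity inequality $\sum_{i=1}^k r_i^{(c)}\leq\eta_k$ needs checking. Fix a world $c$ and a prefix $k$.

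First I dispose of the degenerate case $\eta_k=0$. Then $\eta_i=0$ for every $i\leq k$, so $\rho_i=\min\{\eta_i,\lfloor\MMS_i\rfloor\}=0$ and hence $\alpha_i=\beta_i=0$. The bound constraints then force $r_i^{(c)}=0$ for every $i\leq k$, and the inequality is trivial.

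In the main case $\eta_k>0$, agents $i\leq k$ with $\eta_i=0$ again have $\alpha_i=\beta_i=r_i^{(c)}=0$ and contribute nothing. For the remaining indices I use $r_i^{(c)}-\alpha_i\geq0$ and $\eta_i\leq\eta_k$ to bound
$$
\sum_{i=1}^k\bigl(r_i^{(c)}-\alpha_i\bigr)=\sum_{\substack{i\leq k\\\eta_i>0}}\eta_i\frac{r_i^{(c)}-\alpha_i}{\eta_i}\leq\eta_k\sum_{\substack{i\leq k\\\eta_i>0}}\frac{r_i^{(c)}-\alpha_i}{\eta_i}\leq\eta_k\lambda .
$$
The last step uses the weighted-balance equation and drops the nonnegative terms with $i>k$.

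Next I bound $\lambda$ using the nonnegativity of the $\alpha_i/\eta_i$ terms with $i>k$, together with $\eta_i\leq\eta_k$:
$$
\lambda=\frac{|\{i:\eta_i>0\}|}{n}-\sum_{i:\eta_i>0}\frac{\alpha_i}{\eta_i}\leq1-\frac{1}{\eta_k}\sum_{i=1}^k\alpha_i .
$$
Adding $\sum_{i\leq k}\alpha_i$ to both sides of the first bound then gives $\sum_{i\leq k}r_i^{(c)}\leq\eta_k$.

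The only point needing care, and hence the expected obstacle, is that $\alpha_i\geq0$ for every $i$, so that the dropped terms are genuinely nonnegative. This holds because $\alpha_i$ lies in the domain $[0,\rho_i]$ of $G_i$. The fact $\eta_i/n\in[0,\rho_i]$ noted earlier ensures that $\alpha_i,\beta_i$ are well defined and that the uniform table is feasible. The rest is routine.
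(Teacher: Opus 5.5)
Your proposal is correct and follows the paper's route: the paper proves this lemma by saying the argument of \Cref{lem:weighted_balance_prefix} applies verbatim with $\eta_i$ in place of $\theta_i$, given the side conditions $\eta_1\leq\cdots\leq\eta_n$ and $\alpha_i=0$ whenever $\eta_i=0$. You verify exactly these conditions, together with $\alpha_i\geq 0$, and write out the transferred argument in full.
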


\begin{proof}
The proof of \Cref{lem:weighted_balance_prefix} applies verbatim after replacing $\theta_i$, $\mathcal Q$, and $\mathcal P$ by $\eta_i$, $\mathcal Q_{\mathrm{ch}}$, and $\mathcal P_{\mathrm{ch}}$, respectively. Its required side conditions hold because $\eta_1\leq\cdots\leq\eta_n$ and $\alpha_i=0$ whenever $\eta_i=0$.
\end{proof}

\begin{lemma}[Chores counterpart of \Cref{lem:extreme_point_forest}]
\label{lem:chores_extreme_point_forest}
At every extreme point of $\mathcal Q_{\mathrm{ch}}$, the entries strictly between their bounds form a forest on the agents and worlds.
\end{lemma}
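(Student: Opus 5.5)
The plan is to repeat the cycle-perturbation argument from the proof of \Cref{lem:extreme_point_forest}, with $\theta_i$ replaced by $\eta_i$ and $\mathcal Q$ by $\mathcal Q_{\mathrm{ch}}$. Fix an extreme point $(\tilde r_i^{(c)})$ of $\mathcal Q_{\mathrm{ch}}$. Form the bipartite graph on agents and worlds whose edges are the pairs $(i,c)$ with $\alpha_i<\tilde r_i^{(c)}<\beta_i$. Suppose, for contradiction, that this graph contains a cycle.

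First we check that every agent on the cycle has $\eta_i>0$, so that the normalization is well defined. If $\eta_i=0$, then $\eta_i/n=0$ and $\alpha_i=\beta_i=0$, whether $\rho_i=0$ or $0$ is an endpoint of the domain. Such a row has no non-bound entries and therefore does not appear on the cycle.

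Next, orient the cycle and give its edges alternating signs $\pm1$; this is possible because the graph is bipartite and the cycle is even. For a positive edge $(i,c)$, add $\eta_i\varepsilon$ to $\tilde r_i^{(c)}$; for a negative edge, subtract $\eta_i\varepsilon$.

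We then verify the constraints. Each agent on the cycle meets exactly one positive and one negative edge. The two changes in row $i$ are $\pm\eta_i\varepsilon$, so they cancel and the row sum $\sum_c r_i^{(c)}=\eta_i$ is preserved. Each world $c$ on the cycle also meets one positive and one negative edge, say with agents $i$ and $i'$. The weighted-balance constraint changes by $\eta_i\varepsilon/\eta_i-\eta_{i'}\varepsilon/\eta_{i'}=0$. Entries off the cycle are unchanged. Every perturbed entry lies strictly inside $(\alpha_i,\beta_i)$, so for sufficiently small $\varepsilon>0$ both the $+\varepsilon$ and $-\varepsilon$ perturbations stay in $\mathcal Q_{\mathrm{ch}}$. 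The extreme point is then the midpoint of two distinct feasible tables, which is a contradiction. Hence the graph is a forest.

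No step here is a real obstacle. The only point needing care is the degenerate rows with $\eta_i=0$, which must be excluded from the cycle so that the scaling by $\eta_i$ is nonzero and the normalized terms are defined. The zero-width rows with $\alpha_i=\beta_i$ need no separate treatment, since all their entries are bound.
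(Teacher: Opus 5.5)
Your proposal is correct and follows the paper's proof: it applies the same alternating cycle perturbation by $\pm\eta_i\varepsilon$ as in \Cref{lem:extreme_point_forest}. You also make explicit the paper's remark that every agent on a cycle has $\eta_i>0$, since rows with $\eta_i=0$ have $\alpha_i=\beta_i=0$ and therefore contain no non-bound entries.
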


\begin{proof}
For completeness, the cycle-perturbation proof of \Cref{lem:extreme_point_forest} applies verbatim, using perturbations $\pm\eta_i\varepsilon$ on edges incident to agent $i$. Every agent on a cycle has $\eta_i>0$, so these perturbations are well defined.
\end{proof}

Fix an extreme point $(\tilde r_i^{(c)})$ of $\mathcal Q_{\mathrm{ch}}$. By the two preceding lemmas, it belongs to $\mathcal P_{\mathrm{ch}}$ and its non-bound entries form a forest.

The same circulation construction as in \Cref{lem:prefix_preserving_rounding} yields the required rounding. For completeness, take edges from $s$ to $u_i$ with fixed flow $\eta_i$ and edges from $u_i$ to $v_{i,c}$ with lower and upper bounds $\lfloor\tilde r_i^{(c)}\rfloor$ and $\lceil\tilde r_i^{(c)}\rceil$. For each world $c$, form a chain $v_{1,c}\to\cdots\to v_{n,c}\to t$ in which the edge leaving $v_{k,c}$ has lower bound $0$ and upper bound $\left\lceil\sum_{i=1}^{k}\tilde r_i^{(c)}\right\rceil$, and close the circulation by an edge from $t$ to $s$ with fixed flow $\sum_i\eta_i$. Sending $\tilde r_i^{(c)}$ units from $u_i$ to $v_{i,c}$ and the corresponding prefix sum along each chain gives a feasible fractional circulation. All bounds are integral, so there is an integral feasible circulation. Let $\hat r_i^{(c)}$ be its flow from $u_i$ to $v_{i,c}$. Flow conservation and the chain capacities give
$$
\hat r_i^{(c)}\in
\{\lfloor\tilde r_i^{(c)}\rfloor,\lceil\tilde r_i^{(c)}\rceil\},
\qquad \sum_c\hat r_i^{(c)}=\eta_i,
$$
and
$$
\sum_{i=1}^k\hat r_i^{(c)}
\leq\left\lceil\sum_{i=1}^k\tilde r_i^{(c)}\right\rceil
\leq\eta_k
$$
for every $c,k\in[n]$.

It remains to bound the capacity lost when $G_i$ is replaced by $F_i$ after rounding. Let $k_i$ be the number of non-bound entries of agent $i$. Bound entries are integral and tight. Because $\alpha_i$ and $\beta_i$ are integral, the floor and ceiling of every point in $[\alpha_i,\beta_i]$ remain in this interval. Hence, every rounded non-bound entry $q$ lies in the domain of $F_i$ and satisfies $F_i(q)>G_i(q)-1$. Affinity and row-sum preservation give
$$
\sum_cG_i(\hat r_i^{(c)})=\sum_cG_i(\tilde r_i^{(c)}).
$$
If $k_i=0$, every entry is bound, so
$$
\sum_cF_i(\hat r_i^{(c)})
=\sum_cG_i(\tilde r_i^{(c)}).
$$
If $k_i\geq1$, summing the strict gap over the non-bound entries gives
$$
\sum_cF_i(\hat r_i^{(c)})
>
\sum_cG_i(\tilde r_i^{(c)})-k_i.
$$
Since the left-hand side is integral, it is at least $\lfloor\sum_cG_i(\tilde r_i^{(c)})\rfloor-k_i+1$. Combining the two cases,
$$
\sum_cF_i(\hat r_i^{(c)})
\geq
\left\lfloor\sum_cG_i(\tilde r_i^{(c)})\right\rfloor
-\max\{k_i-1,0\}.
$$
If $E:=\sum_i k_i$ and $a$ agents have $k_i>0$, delete all isolated vertices from the bipartite graph of non-bound entries and let $b\leq n$ be the number of remaining world vertices. For $E>0$, \Cref{lem:chores_extreme_point_forest} gives $E\leq a+b-1\leq a+n-1$; the case $E=0$ is immediate. Therefore,
$$
\sum_i\max\{k_i-1,0\}=E-a\leq n-1.
$$
By \Cref{lem:uniform_relaxed_capacity}, $\lfloor\sum_cG_i(\tilde r_i^{(c)})\rfloor\geq m$ for every agent $i$. Therefore,
$$
\sum_{c=1}^n\sum_{i=1}^nF_i(\hat r_i^{(c)})\geq nm-n+1.
$$

We now restate and prove the chores theorem.

\exactMMSChoresTheorem*

\begin{proof}
The preceding integral bound implies that some world $c^\star$ satisfies
$$
\sum_iF_i(\hat r_i^{(c^\star)})\geq m;
$$
otherwise every integral world capacity would be at most $m-1$, giving a total of at most $nm-n$.

For agent $i$, create $\hat r_i^{(c^\star)}$ easy slots adjacent to $[\eta_i]$ and $F_i(\hat r_i^{(c^\star)})-\hat r_i^{(c^\star)}$ unrestricted slots adjacent to all chores. We show that the resulting bipartite graph has a matching covering the chores. For nonempty $X\subseteq[m]$, let $q:=\min X$. Every unrestricted slot is adjacent to $X$, as is every easy slot of an agent with $\eta_i\geq q$. Hence,
$$
|N(X)|=\sum_iF_i(\hat r_i^{(c^\star)})
-\sum_{i:\eta_i<q}\hat r_i^{(c^\star)}.
$$
If $k$ is the largest index with $\eta_k<q$, the rounded prefix condition yields
$$
\sum_{i:\eta_i<q}\hat r_i^{(c^\star)}
=\sum_{i=1}^k\hat r_i^{(c^\star)}
\leq\eta_k\leq q-1;
$$
the sum is zero if no such $k$ exists. Since $X\subseteq\{q,\ldots,m\}$,
$$
|N(X)|\geq m-q+1\geq|X|.
$$
Hall's theorem gives the desired matching.

Assign every chore to the agent owning its matched slot. Agent $i$ receives at most $\hat r_i^{(c^\star)}$ chores through easy slots and at most $F_i(\hat r_i^{(c^\star)})-\hat r_i^{(c^\star)}$ through unrestricted slots. Therefore,
$$
v_i(A(i))\leq
\hat r_i^{(c^\star)}
+\bv_i\bigl(F_i(\hat r_i^{(c^\star)})-\hat r_i^{(c^\star)}\bigr)
\leq\MMS_i.
$$
First, undo the temporary reversal of the common rank indices. Then undo the normalization and the standard IDO reduction, and finally invoke \Cref{lem:zero_lower_value_reduction}. These steps return an exact MMS allocation for the original instance.

Finally, the algorithm runs in polynomial time. For bivalued costs, each agent's MMS can be computed by a standard count-based dynamic program over the numbers of easy and difficult chores. The remaining construction uses only polynomial-size linear programs, max-flow computations, and bipartite matching, together with the polynomial-time reductions described above.
\end{proof}

\end{document}